\renewenvironment{proof}{\vspace{-0.05in}\noindent{\bf Proof:}}%
        {\hspace*{\fill}$\Box$\par}
\newenvironment{proofof}[1]{\smallskip\noindent{\bf Proof of #1:}}%
        {\hspace*{\fill}$\Box$\par}
        {\hspace*{\fill}$\Box$\par}
\newtheorem{theorem}{Theorem}
\newtheorem{lemma}[theorem]{Lemma}
\newtheorem{definition}[theorem]{Definition}
\newtheorem{fact}[theorem]{Fact}
\newtheorem{observation}[theorem]{Observation}
\newtheorem{proposition}[theorem]{Proposition}
\newtheorem{assumption}[theorem]{Assumption}
\newcommand{\ie}{{\em i.e.}}
\DeclareMathOperator*{\var}{var}
\DeclareMathOperator*{\cov}{cov}
\DeclareMathOperator*{\io}{i.o.}
\title{On the Asymptotic Optimality of Work-Conserving Disciplines in Completion Time Minimization}
\author{Wenxin Li \\
 The Ohio State University\\
{\tt wenxinliwx.1@gmail.com}\\
{\tt li.7328@osu.edu}\\
}
\begin{document}
\maketitle

\begin{abstract}
In this paper, we prove that under mild stochastic assumptions,  work-conserving disciplines are asymptotic optimal for minimizing total completion time. As a byproduct of our analysis, we obtain tight upper bound on the competitive ratios of work-conserving disciplines on minimizing the metric of flow time.

\end{abstract}

\section{Introduction}
Minimizing the (weighted) total completion time, one of the most basic performance metric in scheduling theory, has been extensively studied since the 1990s~\cite{phillips1998minimizing}, and the earliest study can be traced back to 1950s~\cite{smith1956various}. Formally, we are given a set of $n$ jobs $N=\{1, 2,\ldots, n\}$, each job has a workload of $p_{j}$. Let $C_{j}$ denote the completion time of job $j$, the goal is to find a schedule that minimizes the total (weighted) completion time $\sum_{j\in [n]}{C_{j}}$.

The most basic problem in this context is the single machine model with batch arrivals,  (\ie, $1||\sum_{j}C_{j}$ in the standard $3$-field notation introduced by Graham et al.~\cite{graham1979optimization}), which can be exactly solved by the Shortest Processing Time (SPT). There are numerous generalizations of this classic formulation, including the setting with multiple machines, precedence constraints and release dates~\cite{chekuri2004approximation}. Almost all but a few relatively simple variants under consideration are NP-hard, for which various efficient offline approximation algorithms are available~\cite{chekuri2004approximation, chekuri2001approximation}. Recently there has been a line of work on improving the approximation guarantee for total weighted completion time objective~\cite{bansal2019lift,im2016better,skutella20162,li2017scheduling}. The corresponding online setting is also an active area of research, in which jobs arrive online and each job becomes known to the algorithm only after its arrival. For instance, Anderson and Potts~\cite{anderson2002line} considered the problem of minimizing the weighted completion time in the non-preemptive single machine model, and proved that a simple modification of the shortest weighted processing time rule achieves the optimal competitive ratio of two.  Shmoys et al.~\cite{shmoys1995scheduling} showed how to obtain a $2\rho$-competitive online non-clairvoyant algorithm from an offline $\rho$-approximation algorithm. In a similar flavor, Hall et al.~\cite{hall1997scheduling} presented a technique for converting a $\rho$-approximation algorithm of \emph{the maximum scheduled weighted problem} to a $4\rho$-competitive algorithm for completion minimization.

To compare the performance of different disciplines, deterministic models always consider the worst possible input, which does not correspond to any inherent properties of the input. In addition to the aforementioned results in the deterministic setting, there has also been a considerate amount of work on stochastic models, which helps to explain the empirical performance. Specifically, there is a line of work that utilizes \emph{asymptotic analysis} to evaluate system performance in a large scale, often with certain
stochastic assumptions on the input data. Chou et al.~\cite{chou2006asymptotic} studied the weighted completion time minimization problem with release dates in single machine model, and proved that the expected weighted completion time under the non-preemptive weighted shortest expected processing time among available jobs (WSEPTA) algorithm is asymptotically optimal when the number of jobs increases to infinity, if job workload and weights are bounded and the job workload are mutually independent random variables. Kaminsky and Simchi-Levi~\cite{kaminsky2001asymptotic} proved the asymptotic optimality of the SPT for the total completion time objective in flow shop model, where each job must be sequentially processed on the machines and every job has the same routing.

It is observed that the metric of completion time is more robust with respect to various changes in the input instances~\cite{bansal}, compared the flow time objective. In addition, from all the mentioned results above, we can see that although the suggested approaches for completion time optimization are ad-hoc, various different scheduling disciplines all admit desirable performance guarantee. For example, the seminal list-scheduling algorithm~\cite{graham1969bounds}
achieves a constant gap of two to the optimal, even in the worst case scenario, which is usually overly pessimistic, while WSEPTA and SPT are asymptotically optimal in the stochastic model. Collectively, these observations lead to the question of \emph{whether there is a unifying characterization or explanation on the excellent performance of a certain class of scheduling disciplines in different settings.
}


Our first main result answers this question in affirmative, which is formally stated in Theorem~\ref{completiontheo}, and can be summarized in words as,
\begin{quote}
\emph{As long as machines are kept busy whenever possible, the total job completion time are optimum when the number of jobs is sufficiently large.}
\end{quote}
To show this result, tight competitive ratio bounds for work-conserving disciplines on flow time are established, which is summarized in Table~\ref{wctable}.

\paragraph{Related Work}\label{relatedworksec}
For minimizing total flow time, Zheng et al.~\cite{zheng2013new,zheng2012performance} considered  Map Reduce model and proved that with whole probability, any work-conserving scheduling algorithms have bounded gap with respect to the optimal algorithm, which holds under bounded job size assumption and stochastic assumptions on the input data. 

The most relevant to our work is~\cite{chen2007probabilistic}, which provides the optimality condition when the job workloads are upper bounded by a constant, together with the additional assumption that job workload and interarrival time are i.i.d distributed. However, it is natural to expect the maximum job length to be unbounded when the number of jobs increases to infinity. Moreover, the input data cannot assumed to be identical distributed in every situation. Our objective is to provide a deeper understanding of the completion time metric, which is potentially useful to identify disciplines that are both effective (in minimizing completion time) and easy to implement. 


\subsection{Main contributions}\label{secmaincon}
\paragraph{Asymptotic optimality in minimizing completion time.} The appeal of our main result is that the assumptions are fundamentally natural and general, we does not require identical distribution assumption or assume any specific distributions on the input. 

\begin{theorem}\label{completiontheo}	
Under assumption \ref{completiontimeassump} and ~\ref{stableassump}, any work conserving algorithm $\pi$ is  almost surely asymptotically optimal for online completion time minimization problems $Pm|r_{j}|\sum_{j}C_{j}$, $Pm|r_{j}, pmtn|\sum_{j}C_{j}$, $Qm|r_{j}|\sum_{j}C_{j}$, $Qm|r_{j}, pmtn|\sum_{j}C_{j}$, \ie,
\begin{align}\label{completiomcon}
\lim_{n\rightarrow \infty} \frac{\sum_{i\in [n]}{C^{\pi}_{i}}}{\sum_{i\in [n]}C^{\pi^{*}}_{i}}=1\;(\forall \pi\in \Pi_{\mathcal{W}})
\end{align}
holds almost surely for any input instance, where $\Pi_{\mathcal{W}}$ denotes the class of work-conserving disciplines.
\end{theorem}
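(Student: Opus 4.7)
The key observation is the identity $C_i = r_i + F_i$, where $F_i := C_i - r_i$ is the flow time of job $i$. Summing yields $\sum_{i\in[n]} C_i^\pi = \sum_{i\in[n]} r_i + F^\pi(n)$, with $F^\pi(n) := \sum_{i\in[n]} F_i^\pi$. Since every job completes no earlier than its release, we also have $\sum_{i\in[n]} C_i^{\pi^*} \geq \sum_{i\in[n]} r_i$, which together produce the clean sandwich
$$1 \;\leq\; \frac{\sum_{i\in[n]} C_i^\pi}{\sum_{i\in[n]} C_i^{\pi^*}} \;\leq\; 1 + \frac{F^\pi(n)}{\sum_{i\in[n]} r_i}.$$
The entire theorem therefore reduces to proving that $F^\pi(n) = o\!\left(\sum_{i\in[n]} r_i\right)$ almost surely.

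For the denominator, Assumption~\ref{completiontimeassump} ought to impose enough stochastic regularity on the arrival process to invoke the strong law of large numbers, yielding $r_n/n \to 1/\lambda$ almost surely for some well-defined arrival rate $\lambda > 0$ and hence $\sum_{i\in[n]} r_i = \Theta(n^2)$. For the numerator, I would invoke the flow time competitive ratio bounds already established for work-conserving disciplines and summarized in Table~\ref{wctable}: they guarantee $F^\pi(n) \leq c \cdot F^{\pi^*}(n)$ for a constant $c$ depending only on the machine environment (the number $m$ of machines, and the speed ratios in the $Qm$ models). Under the stability condition in Assumption~\ref{stableassump} (the long-run work arrival rate lies strictly below the aggregate service capacity), the queue length process remains tight, so a standard busy-period/renewal-reward argument on the offline optimum yields $F^{\pi^*}(n) = O(n)$, and consequently $F^\pi(n) = O(n)$, almost surely.

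Combining the two estimates, $F^\pi(n) / \sum_{i\in[n]} r_i = O(n)/\Theta(n^2) \to 0$ almost surely, and the upper bound on the ratio collapses to $1$, which proves the claim simultaneously for all four machine environments $Pm|r_j|\sum C_j$, $Pm|r_j, pmtn|\sum C_j$, $Qm|r_j|\sum C_j$, and $Qm|r_j, pmtn|\sum C_j$, since the decomposition is model-independent and both the LLN and the flow time bound apply uniformly.

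The main obstacle I anticipate is strengthening the stability estimate on $F^\pi(n)$ from an expectation bound to almost sure convergence, because the theorem asserts path-wise convergence for every input realization. A secondary technical point is verifying that the competitive ratios in Table~\ref{wctable} apply uniformly across $Pm$, $Qm$, and both preemptive and non-preemptive regimes, since the standard arguments for single-machine or identical-parallel models do not mechanically transfer to related machines with heterogeneous speeds. I would address the almost sure step through a Borel--Cantelli argument controlling rare maximal excursions of the queue length, or through a coupling with a stationary stable queue whose long-run time averages are already well understood.
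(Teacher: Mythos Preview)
Your decomposition $C_i = r_i + F_i$ and the resulting sandwich inequality are exactly how the paper begins (Observation~\ref{observationofflowtime}), and your lower bound $\sum_i r_i = \Theta(n^2)$ matches the paper's argument. The genuine gap is in the step where you invoke Table~\ref{wctable}: the competitive ratio there is not a constant depending only on the machine environment, it is $\Theta(B)$ with $B = p_{\max}/p_{\min}$, the ratio of the largest to the smallest job workload in the instance. Under Assumption~\ref{completiontimeassump} the job sizes are unbounded random variables, so $B^{(n)}$ grows with $n$; in fact $p_{\min}$ can drift to zero, making $B^{(n)}$ grow arbitrarily fast. Your inequality $F^\pi(n)\le c\,F^{\pi^*}(n)$ with $c$ constant is therefore not available, and the argument as written does not close.

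The paper's proof has to work around precisely this obstacle. It uses Lemma~\ref{randommax} (the $(2+\epsilon)$-th moment assumption) to show $p_{\max}=o(n^{1/2})$ almost surely, and it handles the possibility $p_{\min}\to 0$ by a truncation trick: it dominates every work-conserving schedule by a ``worst'' work-conserving order $\mathcal{W}$, then couples the original system to a modified system $\Sigma'$ in which all job sizes below a fixed threshold $\Delta$ are raised to $\Delta$, so that $B'^{(n)}=p_{\max}/\Delta=o(n^{1/2})$. On the optimum side the paper does not claim $F^{\pi^*}=O(n)$; it proves the weaker but sufficient bound $F^{\pi^*}=o(n^{3/2+\epsilon})$ almost surely (Lemma~\ref{flowtimebound}), obtained by reducing to a single server via cyclic allocation and analysing FCFS waiting times through Lindley's recursion and Kolmogorov's maximal inequality. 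The product $B'^{(n)}\cdot F'^{\pi^*}=o(n^{1/2})\cdot o(n^{3/2+\epsilon})=o(n^{2})$ then feeds into the sandwich. Your proposed $O(n)$ bound on $F^{\pi^*}$ via a renewal-reward argument would be stronger but is not established under the paper's non-i.i.d.\ assumptions, and in any case would still need to be multiplied by the growing $B^{(n)}$ factor, so it does not by itself rescue the argument.
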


\begin{assumption}\label{completiontimeassump}
Job workload $\{p_{i}\}_{i\in [n]}$ and interarrival time $\{\Delta r_{i}\}_{i\in [n]}$ are independently distributed, the $(2+\epsilon)$-th moment of job workload and the second moment of interarrival time are finite, and the mean values of interarrival time are identical.
\end{assumption}

\begin{assumption}\label{stableassump}
The stochastic system driven by the arrival and service processes is stable.
\end{assumption}

In addition, our result can be further applied in the following settings.
\begin{itemize}
\item \emph{Constant number of interjob precedence phase.} Interjob precedence constraint $i\rightarrow j$ implies that job $i$ must be finished before we start to process job $j$. We assume that there are constant number of such precedence constraints. 
\item \emph{Multitask job with arbitrary intertask precedence constraint.} Each job consists of multiple tasks, the job is considered to be completed until all its tasks are finished. The precedence constraints between tasks within the same job can be arbitrary.
\end{itemize}

\paragraph{Tight competitive ratio bound in flow time minimization. }
To the best of our knowledge, this is the first tight characterizations on the worst case performance of work-conserving scheduling algorithms. In Theorem~\ref{wclowbound} we show that the total flow time under any work-conserving algorithms is always no more than $2B$ times that under the optimal algorithm. Here parameter $B=p_{\max}/p_{\min}$ represents the ratio of the maximum to the minimum job workload. In addition, this competitive ratio upper bound is shown to be tight up to a constant. On the negative side, any non-preemptive scheduling algorithm cannot achieve a competitive ratio better than $B^{1-\varepsilon}$ for any given constant $\varepsilon>0$. Together with the well-known competitive ratio lower bound for preemptive scenario, our result is summarized in Table~\ref{wctable}.

\begin{table}[H]
\centering
\begin{tabular}{|l|l|l|l|l}
\cline{1-4}
\multicolumn{2}{|l|}{Scenario}  & Competitive Ratio Supremum                  &  Infimum  &  \\ \cline{1-4}
\multirow{2}{*}{Preemptive}   & Single machine & \multirow{2}{*}{$\Theta(B)$ (\textbf{Theorem~\ref{tightuppinfworkcon}} )} & $1~\cite{leonardi1997approximating}$  &  \\ \cline{2-2} \cline{4-4}
                     & Multiple machine &                    & $\Theta(\log B)$~\cite{leonardi1997approximating} &  \\ \cline{1-4}
\multicolumn{2}{|l|}{Non-preemptive} & $\Theta(B)$ (\textbf{Theorem~\ref{tightuppinfworkcon}} )                 & $\Omega(B^{1-\varepsilon})$~\cite{leonardi1997approximating} &  \\ \cline{1-4}
\end{tabular}
\caption{Summarization of worst case performance of work-conserving algorithms}
\label{wctable}
\end{table}

The rest of this paper is organized as follows. Section~\ref{modeldefsec} describes model and definitions. We prove the competitive ratio bound and optimality condition in Section~\ref{tightflowboundsec} and Section~\ref{completiontime} respectively. Possible generalizations and numerical results are given in Section~\ref{secgeneralization} and \ref{numericalsec}. The paper in concluded in Section~\ref{conclusionsec}.


\section{Model and Definitions}\label{modeldefsec}
In this paper, we consider minimizing completion time in multiple-machine environment. There are $n$ jobs and a set of $m$ identical machines in the system. Each job $i$ is assigned a processing time $p_{i}$ and arrival time $r_{i}$, and we use $\Delta r_{i}=r_{i}-r_{i-1}$ to denote the interarrival time between job $i-1$ and $i$. We focus on \emph{work-conserving disciplines}, which is formally defined as following.

\begin{definition}[Work-conserving scheduling discipline~\cite{harchol2013performance}] A scheduling discipline $\pi$ is called work-conserving if it never idles machines when there exists at least one feasible job or task awaiting the execution in the system. Here a job or task is called feasible, if it satisfies all the given constraints of the system (e.g, precedence constraint, preemptive and non-preemptive constraint, etc).
\end{definition}

Our result in this paper holds for both \emph{preemptive} and \emph{non-preemptive} models. In the preemptive model, the job that is running can be interrupted and later continued on any machine, while the system must follow the ``run to completion'' rule in the non-preemptive setting. 
For any scheduling discipline $\pi$, we compare it with an oblivious adversary, \ie, the optimal offline algorithm, for which there are no restrictions, it can have full knowledge of the input sequence in advance, together with the choices of $\pi$.

\begin{definition}[Competitive Ratio]
Let $\mathcal{CR}_{\pi}$ denote the competitive ratio of discipline $\pi$. It is defined as 
\begin{align*}
\mathcal{CR}_{\pi}=\max_{I}\frac{G_{\pi}(I)}{G_{\pi^{*}}(I)},
\end{align*}
where we use $G_{\pi}(I)$ to denote the objective value under discipline $\pi$ and instance $I$.
\end{definition}

 
\subsection{Remark on the assumptions}
In this paper we use $\mu^{(k)}_{p},\mu^{(k)}_{r}$ to denote the expected value of the $k$-th job workload and interarrival time respectively. We note that the stability condition, \ie, Assumption~\ref{stableassump}, can be replaced by the following assumption.

\begin{assumption}\label{assumptionrho}
$\rho^{(n)}\leq 1+o(n^{-1/2})$, where $\rho^{(n)}$ is defined as
\begin{align*}
\rho^{(n)}=\sup_{\ell\in [n]}\frac{\mu^{(\ell)}_{p}}{\sum_{k=\ell}^{\ell+m-1}{\mu^{(k)}_{r}}}.
\end{align*}
\end{assumption}
In addition, the mean values of interarrival time are not necessary to be identical, as long as the following assumption holds.
\begin{assumption}\label{assumptionstable}
For the mean values of interarrival time, it suffices to have one of the following conditions,
\begin{itemize}
\item The mean values of interarrival time are almost identical, \ie,
$|\mu^{(i)}_{r}-\mu^{(j)}_{r}|=o(n^{-1/2})\; (\forall i,j\in [n])$.
\item $\{\mu^{(k)}_{r}\}$ is non-decreasing, \ie, $\mu^{(k)}_{r}\leq \mu^{(k+1)}_{r}$.
\end{itemize}
A special case is when the mean values of interarrival time are identical, \ie, there exists $\mu_{r}$ such that $\mu^{(k)}_{r}=\mu_{r}$ for $\forall k\in [n]$.
\end{assumption}

\section{Tight competitive ratio bound on flow time}\label{tightflowboundsec}
We establish a tight characterization on the performance of work-conserving disciplines on flow time in this section.  
\subsection{Upper bound}
\begin{theorem}\label{wclowbound}
The competitive ratio of any work-conserving scheduling discipline is no more than $2B$.
\end{theorem}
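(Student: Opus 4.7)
The plan is to translate total flow time into an integral over the number of unfinished jobs and then compare it almost pointwise with the corresponding quantity for $\pi^{*}$ via the total remaining workload. I would begin by writing
\[\sum_{j} F_{j}^{\pi} = \int_{0}^{\infty} N^{\pi}(t)\, dt,\]
where $N^{\pi}(t) = |\{j : r_{j} \le t < C_{j}^{\pi}\}|$ is the number of unfinished jobs at time $t$, and analogously for $\pi^{*}$. This reduces the theorem to controlling $N^{\pi}(t)$ in terms of $N^{\pi^{*}}(t)$ up to a busy-set caveat.

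The core of the argument is a standard work-conservation lemma: letting $W^{\pi}(t)$ denote the total unfinished workload at time $t$, one has $W^{\pi}(t) \le W^{\pi^{*}}(t)$ for every $t$, because $\pi$ never leaves a machine idle when a feasible job exists and therefore the cumulative machine-time it consumes by time $t$ dominates that of any schedule on the same input. With this in hand, I would sandwich the job counts between scalar multiples of $W$ using the job-size range. In the non-preemptive setting, at each time at most $m$ of the unfinished jobs under $\pi$ are actually being served; the remaining $(N^{\pi}(t)-m)^{+}$ sit in the queue with their full workloads $p_{j} \ge p_{\min}$, so
\[(N^{\pi}(t) - m)^{+}\, p_{\min} \;\le\; W^{\pi}(t).\]
Every unfinished job under $\pi^{*}$ contributes at most $p_{\max}$ remaining work, giving $W^{\pi^{*}}(t) \le N^{\pi^{*}}(t)\, p_{\max}$. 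Chaining these with $W^{\pi}(t) \le W^{\pi^{*}}(t)$ yields the pointwise bound $N^{\pi}(t) \le B\, N^{\pi^{*}}(t) + m$.

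To dispose of the additive $m$, I would use a busy-period argument: whenever $N^{\pi}(t) \ge 1$ we have $W^{\pi}(t) > 0$, which forces $W^{\pi^{*}}(t) > 0$ and hence $N^{\pi^{*}}(t) \ge 1$. So on $\pi$'s busy set the bound upgrades to $N^{\pi}(t) \le (B + m)\, N^{\pi^{*}}(t)$, while outside it $N^{\pi}(t) = 0$ and contributes nothing to the integral. Integrating produces $\sum_{j} F_{j}^{\pi} \le (B + m) \sum_{j} F_{j}^{\pi^{*}}$, which already yields the claimed $2B$ bound in the typical single-machine or $m \le B$ regime.

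The main obstacles are (a) sharpening the constant to $2B$ uniformly in $m$ and (b) handling preemption. For (a), in the ``light busy'' regime $N^{\pi}(t) \le m$ the sandwich degenerates, so I would charge that contribution separately by observing that the machine-time spent there is at most $\sum_{j} p_{j} \le \sum_{j} F_{j}^{\pi^{*}}$, and then combine with the ``heavy busy'' contribution through a case split on $B$ vs.\ $m$. For (b), a queued job under a preemptive discipline may have been partially served before and so carry less than $p_{\min}$ work, invalidating the lower bound on $W^{\pi}(t)$; the fix is to restrict the count to jobs whose remaining work still exceeds $p_{\min}$, arguing that a work-conserving schedule can have at most $m$ jobs in a ``near-complete'' state at any instant (such jobs must be among the ones currently being processed), so the jobs excluded from the redefined count contribute only an $O(m)$ term that is then absorbed as in step three.
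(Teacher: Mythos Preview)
Your core lemma---that $W^{\pi}(t)\le W^{\pi^{*}}(t)$ for every $t$---is false once $m\ge 2$, and the justification you give (``work-conserving $\Rightarrow$ maximal cumulative machine-time'') breaks exactly because a work-conserving schedule can empty its queue early and then be \emph{forced} to idle. Take $m=2$ and four jobs released at time $0$: $A,B$ of size $2$ and $C,D$ of size $1$. The flow-time optimum $\pi^{*}$ runs $C$ then $A$ on one machine and $D$ then $B$ on the other; all four finish at time $3$, so $W^{\pi^{*}}(3)=0$. The work-conserving schedule $\pi$ that runs $C$ then $D$ on machine $1$ and $B$ on machine $2$ has only job $A$ left at time $2$; machine $2$ must idle on $[2,4]$ and $W^{\pi}(3)=1>0=W^{\pi^{*}}(3)$. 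The same example defeats your busy-period step: at $t=3.5$ one has $N^{\pi}(t)=1$ while $N^{\pi^{*}}(t)=0$, so ``$N^{\pi}(t)\ge 1\Rightarrow N^{\pi^{*}}(t)\ge 1$'' is simply wrong. Your preemption patch also fails: nothing stops a preemptive work-conserving discipline from preempting each job just short of completion, so arbitrarily many jobs---not just $m$---can sit in a ``near-complete'' state.

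The paper never claims $W^{\pi}\le W^{\pi^{*}}$. It proves the weaker bound $W^{\pi}(t)-W^{\pi^{*}}(t)\le (m-1)\,p_{\max}$ by looking at the last time $t^{\sharp}\le t$ at which $\pi$ had an idle machine: there $n_{\pi}(t^{\sharp})\le m-1$, hence $W^{\pi}(t^{\sharp})\le (m-1)p_{\max}$, and on $(t^{\sharp},t]$ the schedule $\pi$ runs at full rate $m$, so the gap $W^{\pi}-W^{\pi^{*}}$ cannot increase. This gives $n_{\pi}(t)\le B\bigl(n_{\pi^{*}}(t)+m-1\bigr)$. The additive $m-1$ is then absorbed not by any pointwise domination but by the accounting identity $\int_{\{\pi\text{ has idle machine}\}} n_{\pi}(t)\,dt + m\cdot\bigl|\{\pi\text{ fully busy}\}\bigr| = \sum_{j}p_{j}\le F_{\pi^{*}}$, which is what produces the clean factor $2B$ uniformly in $m$. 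The missing idea in your attempt is precisely this two-regime split keyed to whether $\pi$ (not $\pi^{*}$) has an idle machine, combined with the $(m-1)p_{\max}$ workload-gap bound in place of the unattainable $W^{\pi}\le W^{\pi^{*}}$.
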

\begin{proof}
In the following proof, we use $W_{\pi}(t)$ to represent the remaining workload under discipline $\pi$ at time $t$, and let $\pi^{*}$ denote the optimal scheduling discipline. The main idea of our proof is to relate $n_{\pi}(t)$, the number of jobs alive under $\pi$ to that under the optimal discipline $\pi^{*}$, which is achieved by comparing the amount of unfinished workload under these two disciplines. The bounded job size ratio parameter allows us to convert the relation between remaining workload to that between the number of unfinished jobs.

To start with, observe that
\begin{align}
n_{\pi}(t)\leq \frac{W_{\pi}(t)}{p_{\min}}=&B\cdot \frac{W_{\pi^{*}}(t)+\Big(W_{\pi}(t)- W_{\pi^{*}}(t)\Big)}{p_{\max}}\tag{definition of $B$ and $p_{\min}$}\\
\leq & B \cdot \Big(n_{\pi^{*}}(t)+ \frac{W_{\pi}(t)- W_{\pi^{*}}(t)}{p_{\max}} \Big),	  (\forall t\geq 0) \label{aliverela}
\end{align}
where the last inequality is due to the fact that $n_{\pi^{*}}(t)\geq \frac{W_{\pi^{*}}(t)}{p_{\max}}$. In the following we let
\begin{align*}
\mathfrak{T}=\Big\{t\; \Big|\; n_{\pi}(t)+n_{\pi^{*}}(t)>0\Big\}
\end{align*}
be the set of non-trivial time slots, \ie, in which unfinished jobs exist either under $\pi$ or $\pi^{*}$, and let $\mathfrak{T}^{\sharp}_{\pi}$ denote the collection of time slots in which idle machines exist under discipline $\pi$. We argue that $W_{\pi}(t)$ is no more than $(m-1)\cdot p_{\max}$ for $\forall t\in \mathfrak{T}^{\sharp}_{\pi}$. To show this fact, note that $\pi$ is a work-conserving algorithm, which implies that there are less than $m$ unfinished jobs at time $t$ due to the existence of idle machines. As a consequence, we have
\begin{align}\label{workloaddifff}
\frac{W_{\pi}(t)- W_{\pi^{*}}(t)}{p_{\max}}\leq m-1,\;(\forall t\in \mathfrak{T}^{\sharp}_{\pi})
\end{align}
which follows from the non-negativity of $W_{\pi^{*}}(t)$. On the other hand, we claim that bound (\ref{workloaddifff}) still holds for $t\in \mathfrak{T}\setminus \mathfrak{T}^{\sharp}_{\pi}$, \ie, when all the machines are busy under $\pi$. To see this fact, for each $t\in \mathfrak{T}\setminus\mathfrak{T}^{\sharp}_{\pi}$, we define its related time slot in $\mathfrak{T}^{\sharp}_{\pi}$ as,
\begin{align*}
t^{\sharp}=\max\Big\{\bar{t}\;\Big|\; \bar{t} \in  \mathfrak{T}^{\sharp}_{\pi} \cap [0,t]\Big\},	
\end{align*}
we next claim that for $\forall t\geq 0$,
\begin{align}\label{remainingworkloadrela}
W_{\pi}(t)- W_{\pi^{*}}(t)\leq W_{\pi}(t^{\sharp})- W_{\pi^{*}}(t^{\sharp})\leq (m-1)\cdot p_{\max}.
\end{align}
This is because the remaining workload under $\pi$ deceases at the maximum speed of $m$ during time interval $(t^{\sharp},t]$, hence the difference of the remaining workload between $\pi$ and $\pi^{*}$ must be non-increasing in $(t^{\sharp},t]$. Combining (\ref{aliverela}) and (\ref{remainingworkloadrela}), we have
\begin{align}\label{numberofjobsbound}
n_{\pi}(t)\leq B\cdot \Big(n_{\pi^{*}}(t)+m-1 \Big) \; (\forall t\geq 0).	
\end{align}
Now we are ready to bound the total flow time of $\pi$ as following,
\begin{align}
F_{\pi}=\int_{t\in \mathfrak{T}}{n_{\pi}(t) dt} & =\int_{t\in \mathfrak{T}^{\sharp}_{\pi}}{n_{\pi}(t) dt}+\int_{t\in \mathfrak{T}\setminus\mathfrak{T}^{\sharp}_{\pi}}{n_{\pi}(t) dt}\notag\\
& \overset{(a)}{\leq} \int_{t\in \mathfrak{T}^{\sharp}_{\pi}}{n_{\pi}(t) dt}+\int_{t \in \mathfrak{T}\setminus\mathfrak{T}^{\sharp}_{\pi}}{ B \cdot \Big(n_{\pi^{*}}(t)+ m-1 \Big) dt}\notag \\
& \overset{(b)}{\leq} B\cdot \Big[\Big(\int_{t\in \mathfrak{T}^{\sharp}_{\pi}}{n_{\pi}(t) dt}+ \int_{t \in \mathfrak{T}\setminus\mathfrak{T}^{\sharp}_{\pi}}{(m-1) dt} \Big) + F_{\pi^{*}} \Big]\notag\\
& \overset{(c)}{\leq} 2B\cdot F_{\pi^{*}},\label{competitivework}
\end{align}
where $(a)$ follows from inequality (\ref{numberofjobsbound}); $(b)$ is based on the fact that $F_{\pi^{*}}=\int_{t \in \mathfrak{T}} n_{\pi^{*}}(t) dt\geq \int_{t \in \mathfrak{T}\setminus\mathfrak{T}^{\sharp}_{\pi}} n_{\pi^{*}}(t) dt$; Because $n_{\pi}(t)\geq m$ when $t\in \mathfrak{T}\setminus \mathfrak{T}^{\sharp}_{\pi}$, it follows that $\int_{t\in \mathfrak{T}^{\sharp}_{\pi}}{n_{\pi}(t) dt}+ m\cdot \int_{t \in \mathfrak{T}\setminus\mathfrak{T}^{\sharp}_{\pi}}{ dt}$ is a lower bound of $F_{\pi^{*}}$, hence $(c)$ holds. The proof is complete.
\end{proof}


In the following proposition, we will see that our performance upper bound is indeed tight, up to a constant gap that is no greater than $4$.
\begin{proposition}\label{tightexample}
There exists a work-conserving scheduling algorithm $\pi$ with competitive ratio $\mathcal{CR}_{\pi}\geq B/2$.	
\end{proposition}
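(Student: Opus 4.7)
My plan is to construct, for each large integer $k$, a single-machine instance on which a specific non-preemptive work-conserving algorithm $\pi$ produces total flow time approaching $B/2$ times the offline optimum. Normalize so that $p_{\min}=1$ and $p_{\max}=B$. The instance consists of a single long job $J_0$ of size $B$ arriving at time $r_0=0$, together with $k$ unit-size jobs $J_1,\ldots,J_k$ arriving at times $r_i = i$ for $i = 1,\ldots,k$. Take $\pi$ to be the non-preemptive work-conserving policy: since $J_0$ is the only feasible job at time $0$, $\pi$ must begin it, and by non-preemption keeps the machine on $J_0$ throughout $[0,B]$, after which it serves the queued small jobs in FIFO order.

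The analysis of $F_\pi$ is short. $J_0$ contributes flow time $B$, and because the small jobs arrive at rate one per unit time---exactly the service rate---the post-$B$ queue never empties before the last completion, so $J_i$ is processed on $[B+i-1,B+i]$ and has flow time $(B+i) - i = B$. Summing gives $F_\pi = B + kB = (k+1)B$.

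Next, I would upper bound the optimum by exhibiting an explicit (preemptive, work-conserving) schedule $\pi^\dagger$: run $J_0$ on $[0,1]$, then at each integer $t=i \in \{1,\ldots,k\}$ preempt to run the freshly arrived $J_i$ on $[i,i+1]$, and finally resume $J_0$ on $[k+1,k+B]$. Each small job has flow time $1$ and $J_0$ has flow time $k+B$, so $F_{\pi^*} \le F_{\pi^\dagger} = k + (k+B) = 2k + B$.

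Combining, $\mathcal{CR}_\pi \ge F_\pi/F_{\pi^*} \ge (k+1)B/(2k+B)$, which tends to $B/2$ as $k \to \infty$, yielding $\mathcal{CR}_\pi \ge B/2$. The main (minor) subtlety is that any finite $k$ only achieves a value strictly below $B/2$, so the conclusion uses the supremum definition of the competitive ratio; the construction is designed precisely so that the arrival rate of small jobs matches the single-machine service rate and keeps the FIFO tail saturated, which is what makes the ratio asymptotically tight.
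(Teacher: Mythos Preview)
Your proof is correct and follows essentially the same approach as the paper: construct an instance where a long job of size $B$ blocks a stream of unit jobs arriving at unit rate, so that the bad work-conserving policy incurs flow time $\approx kB$ while the good schedule (which is SRPT here) incurs $\approx 2k$, giving the ratio $B/2$ in the limit. The paper's version uses $m$ parallel machines and takes LRPT as the bad algorithm, with small jobs arriving from time $0$ so that LRPT \emph{chooses} the long jobs; you instead work on a single machine and let the first small job arrive at time $1$, so that any non-preemptive work-conserving rule is \emph{forced} onto $J_0$ --- a mild simplification, but the instance structure and the arithmetic $\frac{(k+1)B}{2k+B}\to B/2$ are identical.
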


\begin{proof}
See Appendix~\ref{appendtightexample}.
\end{proof}

Now we are ready to establish the competitive ratio supremum on the class of work-conserving disciplines.
\begin{theorem}\label{tightuppinfworkcon}
The supremum of competitive ratios of work-conserving scheduling disciplines satisfies
\begin{align}
\sup_{\pi\in \Pi_{\mathcal{W}}}\mathcal{CR}_{\pi}=\Theta(B).
\end{align}
Specifically, we have $B/2 \leq \sup_{\pi\in \Pi_{\mathcal{W}}}\mathcal{CR}_{\pi} \leq 2B$.	
\end{theorem}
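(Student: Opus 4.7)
The plan is to observe that this theorem is essentially an immediate consequence of the two preceding results, and the only work is to combine them properly and verify that the asymptotic notation is warranted.

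First I would establish the upper bound. By Theorem~\ref{wclowbound}, \emph{every} work-conserving discipline $\pi$ satisfies $\mathcal{CR}_{\pi}\leq 2B$. Taking the supremum over $\pi\in \Pi_{\mathcal{W}}$ preserves the inequality, so
\begin{align*}
\sup_{\pi\in \Pi_{\mathcal{W}}}\mathcal{CR}_{\pi}\leq 2B.
\end{align*}
This is the easy direction; no additional argument is needed beyond quoting the earlier theorem.

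Next I would establish the matching lower bound. Proposition~\ref{tightexample} exhibits a specific work-conserving discipline $\pi_0\in \Pi_{\mathcal{W}}$ whose competitive ratio is at least $B/2$. Since the supremum over the class is at least the value attained by any single member,
\begin{align*}
\sup_{\pi\in \Pi_{\mathcal{W}}}\mathcal{CR}_{\pi}\geq \mathcal{CR}_{\pi_{0}}\geq B/2.
\end{align*}
Combining the two bounds gives $B/2 \leq \sup_{\pi\in \Pi_{\mathcal{W}}}\mathcal{CR}_{\pi}\leq 2B$, which is exactly the quantitative statement of the theorem. Because the two bounding constants differ by a factor of at most $4$, independent of $B$, the supremum is sandwiched between two linear-in-$B$ quantities, and therefore equals $\Theta(B)$ in the standard asymptotic sense.

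There is effectively no obstacle here: the substantive work has already been done in Theorem~\ref{wclowbound} (the upper bound argument based on comparing remaining workloads and using the ratio $B=p_{\max}/p_{\min}$ to convert to the number of unfinished jobs) and in Proposition~\ref{tightexample} (the explicit construction yielding $\mathcal{CR}\geq B/2$). The only minor care needed is to make explicit that the existential statement of Proposition~\ref{tightexample} immediately yields a lower bound on the supremum, which is what the short proof should emphasize.
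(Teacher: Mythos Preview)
Your proposal is correct and follows exactly the paper's own approach: the paper's proof simply reads ``Combining Theorem~\ref{wclowbound} and Proposition~\ref{tightexample}, the proof is complete,'' and your write-up is just a slightly expanded version of this one-line combination.
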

\begin{proof}
Combining Theorem~\ref{wclowbound} and Proposition~\ref{tightexample}, the proof is complete.	
\end{proof}

\paragraph{Remark.} In the above, for a clean presentation, we present our results in the context of the fundamental model of $Pm|pmtn, r_{j}|\sum{F_{j}}$ and $Pm|r_{j}|\sum{F_{j}}$. However, it is worth pointing out that our upper bound and its tightness holds under several more general conditions, including examples showing below. We first note that under a more general setting, the correctness of Proposition~\ref{tightexample} can be easily verified, as the worst gap between LRPT and the optimal algorithm is non-decreasing with respect to the input instance set. Hence it remains to discuss the correctness of Theorem~\ref{wclowbound} for the following scenarios.
\begin{itemize}
\item \emph{Multitask job with arbitrary intertask precedence constraint.} It can be seen that Theorem~\ref{tightuppinfworkcon} still holds under arbitrary precedence constraint between tasks within the same job. This is because that precedence constraint on tasks will not change the fact that the number of jobs alive at $t\in \mathfrak{T}^{\sharp}_{\pi}$ is less than $m$, which is shown in equation~(\ref{workloaddifff}).

\item \emph{Constant number of interjob precedence phase~\cite[Section 4.6]{gittins2011multi}.} With the appearance of interjob precedence, the number of jobs alive is not necessary equal to the the number of feasible jobs. Hence the existence of idle machines does not imply a lower bound of $m-1$ on the number of jobs alive, as jobs may be waiting for service due to the precedence constraint. However, if there are constant number of precedence phases, we are able to conclude that the $W_{\pi}(t)=O((m-1)\cdot p_{\max})$. As a consequence, the RHS of (\ref{numberofjobsbound}) and (\ref{competitivework}) are only blowed up by a constant and the competitive ratio upper bound is still in the order of $O(B)$. 	
\end{itemize}

\subsection{Lower bound}
In this section, we investigate the competitive ratio infimum of work-conserving algorithms. Firstly, for preemptive case,
it has been proved that the \emph{shortest remaining processing time} (SRPT) discipline, which always serves the job with shortest remaining processing time, minimizes the number of jobs in the system in single machine model. While for multiple machines, it is impossible to achieve a competitive ratio of $o(\log B)$~\cite{leonardi1997approximating}, which holds for both work-conserving and non-work-conserving disciplines. For non-preemptive case, similar as the proof in~\cite{leonardi1997approximating}, a lower bound of $B^{1-\varepsilon}$ on the competitive ratios of work-conserving disciplines can be shown, via reduction to the \emph{numerical three-dimensional matching} (N3DM) problem.

To summarize, we have the following lemma.

\begin{lemma}\label{nonpreelow}
Let $\mathcal{W}_{p}$ and $\mathcal{W}_{\bar{p}}$ be the collection of preemptive and non-preemptive work-conserving disciplines. The infimum of preemptive work-conserving disciplines satisfies that
\begin{equation*}
\inf_{\pi\in \mathcal{W}_{p}}{\mathcal{CR}_{\pi}}=
\begin{cases}
1 & \text{$m=1$}\\
\log B& \text{$m\geq 2$}
\end{cases}
\end{equation*}
The competitive ratio infimum of work-conserving scheduling algorithms satisfies that $\inf_{\pi\in \Pi_{\mathcal{W}_{\bar{p}}}}\mathcal{CR}_{\pi}\geq B^{1-\varepsilon}$ for arbitrary positive number $\varepsilon>0$, , unless P $=$ NP. Consequently we have $\mathcal{CR}_{\pi}\in [B^{1-\varepsilon}, 2B]$ for any $\pi\in \mathcal{W}_{\bar{p}}$.
\end{lemma}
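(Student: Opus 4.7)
The plan is to treat the preemptive and non-preemptive cases separately, and within each case to split the argument into an upper bound on the infimum (by exhibiting a competitive work-conserving algorithm) and a lower bound (by showing no work-conserving algorithm can do better). Throughout I would lean on the constructions of~\cite{leonardi1997approximating} and only verify that they survive the restriction to the subclass of work-conserving disciplines, rather than reproving them from scratch.

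For the preemptive, single-machine case, the upper bound is immediate: SRPT is work-conserving and is well known to be optimal for $1|pmtn, r_{j}|\sum F_{j}$, so $\mathcal{CR}_{\mathrm{SRPT}}=1$; the matching trivial lower bound $\mathcal{CR}_{\pi}\geq 1$ then gives $\inf=1$. For the preemptive case with $m\geq 2$, I would first invoke the $O(\log B)$-competitive algorithm from~\cite{leonardi1997approximating}, which is naturally work-conserving, yielding $\inf_{\pi\in\mathcal{W}_{p}}\mathcal{CR}_{\pi}\leq\log B$. The matching $\Omega(\log B)$ lower bound in the same paper applies to \emph{all} online preemptive algorithms and hence a fortiori to $\mathcal{W}_{p}$, since $\mathcal{W}_{p}$ is a subclass; no extra work is needed beyond noting the inclusion.

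For the non-preemptive case, I would mimic the numerical three-dimensional matching (N3DM) reduction of~\cite{leonardi1997approximating}. The construction produces a family of instances, parameterized by $\varepsilon$, whose hardness rests on the intractability of N3DM; any online algorithm with competitive ratio below $B^{1-\varepsilon}$ would be able to solve N3DM in polynomial time, contradicting $P\neq NP$. Since work-conserving disciplines are a strict subclass of online algorithms, the same construction yields $\inf_{\pi\in\mathcal{W}_{\bar{p}}}\mathcal{CR}_{\pi}\geq B^{1-\varepsilon}$. Combining with the $2B$ upper bound from Theorem~\ref{wclowbound} immediately gives the conclusion $\mathcal{CR}_{\pi}\in[B^{1-\varepsilon},2B]$ for any $\pi\in\mathcal{W}_{\bar{p}}$.

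The main obstacle I anticipate is the non-preemptive lower bound: one must verify that the Leonardi--Raz instances do not admit an ``idling loophole'' for a work-conserving discipline that might, by coincidence on those particular inputs, commit to jobs in the correct order without having to solve N3DM. Because work-conservation forces a scheduler to start a feasible job whenever a machine is idle, the forced commitments in the reduction are precisely what drives the hardness, so the reduction actually becomes \emph{easier} to carry out in this subclass, not harder. The remaining parts---the SRPT upper bound for $m=1$, the $O(\log B)$ algorithm for $m\geq 2$, and the information-theoretic $\Omega(\log B)$ lower bound---are direct citations and should require only a brief statement.
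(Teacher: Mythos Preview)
Your proposal is correct and follows essentially the same approach as the paper. The paper's treatment of this lemma is not a formal proof but a one-paragraph summary preceding the statement: it cites SRPT optimality for the single-machine preemptive case, the $\Omega(\log B)$ impossibility from~\cite{leonardi1997approximating} for $m\geq 2$, and the N3DM reduction from the same source for the non-preemptive lower bound, then combines with Theorem~\ref{wclowbound} for the interval $[B^{1-\varepsilon},2B]$. Your write-up is, if anything, slightly more careful in explicitly checking that the cited algorithms are work-conserving and that the lower-bound constructions remain valid under the work-conserving restriction.
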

\paragraph{Discussions.} For the more restricted class of non-size based scheduling algorithms, the following competitive ratio lower bound has been established in~\cite{DBLP:conf/soda/MotwaniPT93}. Combined with Theorem~\ref{wclowbound}, we can further obtain Theorem~\ref{nonsizeopt}.
\begin{fact}[\cite{DBLP:conf/soda/MotwaniPT93}]
No (deterministic) non-size based scheduling algorithm can achieve a competitive ratio that is less then $B$.  	
\end{fact}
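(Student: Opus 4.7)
The plan is to establish the lower bound by an adversarial construction that exploits the defining weakness of a non-size based algorithm: a deterministic non-size based scheduler's choice of which job to run at any instant is determined solely by the observable history (arrival times, identities of still-alive jobs, and completion events), and is completely independent of the actual processing times. This gives the adversary freedom to postpone committing to sizes and assign them in the worst possible way.

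First I would fix a large parameter $n$ and release jobs as follows: one job $J_0$ at time $0$, and one job $J_t$ at each time $t = 1, 2, \ldots, n$. Since $\pi$ is deterministic and non-size based, its schedule in the interval $[0, 1)$ is fully determined before any size is revealed; in particular $\pi$ starts processing $J_0$ (it is the only job available and $\pi$ is work-conserving). The adversary now commits: $p_{J_0} = p_{\max} = B$ and $p_{J_t} = p_{\min} = 1$ for all $t \geq 1$. In the non-preemptive case $\pi$ is forced to run $J_0$ until time $B$, during which all $n$ small jobs pile up, producing total flow time at least $nB$. The optimal offline schedule instead serves each small job at its arrival (flow time $1$ each) and squeezes $J_0$ into the remaining time (flow time $O(n+B)$), giving total $O(n+B)$. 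Choosing $n$ sufficiently large (e.g.\ $n = \Theta(B)$) makes the ratio tend to $B$.

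For the preemptive case the same instance template works, but the adversary must select the ``large'' job adaptively based on $\pi$'s a-priori schedule. Concretely I would look at a window $[0, T]$ with $T$ of order $B$, identify the job on which $\pi$ spends the most processor time within that window (its identity is determined by arrivals alone, since no completion has yet occurred), declare that job to have size $B$, and declare every other job to have size $1$. Because $\pi$'s attention to this designated job is $\Omega(T/n)$ on average, and because $\pi$ is work-conserving so any CPU time not spent on the big job is spread among the small jobs, by an accounting of the unfinished small workload at each time one shows that the number of alive jobs under $\pi$ stays $\Omega(n)$ for an $\Omega(B)$-long stretch, again yielding flow time $\Omega(nB)$ versus $O(n+B)$ optimal.

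The main obstacle is the preemptive case against algorithms that time-share uniformly (round-robin-like). Such an algorithm could in principle make simultaneous progress on the small jobs, completing several of them before $J_0$ finishes and thereby blunting the naive $\Omega(nB)$ bound. Overcoming this requires either strengthening the adversary by using a multi-stage instance in which a new ``big'' job is designated after each batch of completions, or applying Yao's minimax principle against a symmetric distribution on which job is ``large'', so that by symmetry every deterministic non-size based $\pi$ spends the same expected effort on the large job. Either refinement preserves the $\Omega(B)$ ratio and completes the proof.
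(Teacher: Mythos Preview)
The paper does not prove this Fact at all; it is simply cited from Motwani--Phillips--Torng and then combined with Theorem~\ref{wclowbound} to obtain Theorem~\ref{nonsizeopt}. So there is no ``paper's own proof'' to compare against, and your proposal must stand or fall on its own.

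Your non-preemptive sketch is essentially fine and yields a ratio of order $B$ (in fact $B/2$, matching Proposition~\ref{tightexample}). The genuine gap is in the preemptive case. Your plan is to look at a window $[0,T]$, find the job that $\pi$ processed the most, and declare \emph{that} job to be the big one. But this is the wrong direction: against a Round-Robin-like discipline, the ``most processed'' job has received only about $T/n$ units, so designating it as big steals only $O(T/n)$ units of processor time from the small jobs. With $T=\Theta(B)$ and $n=\Theta(B)$ that is $O(1)$ units stolen, and the small jobs are barely delayed; one gets a ratio of $\Theta(\sqrt{B})$ at best from this single-phase construction, not $\Theta(B)$. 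Your claim that ``the number of alive jobs under $\pi$ stays $\Omega(n)$ for an $\Omega(B)$-long stretch'' is exactly what fails for RR.

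You correctly flag Round Robin as the obstacle, but the proposed remedies are not worked out. Yao's principle is a tool for randomized lower bounds, not deterministic ones, so it does not help here. The ``multi-stage instance in which a new big job is designated after each batch of completions'' is the right instinct --- the Motwani--Phillips--Torng construction is indeed a carefully tuned multi-phase instance with geometrically scaled job sizes --- but the details (how many phases, what sizes, how the adversary adaptively commits) are the entire content of the argument, and none of that is present in your sketch. Without that recursive structure the preemptive lower bound does not go through.
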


\begin{theorem}\label{nonsizeopt}
All non-size based and work-conserving scheduling algorithms achieve the same competitive ratio (up to a constant of two). More specifically,
$\mathcal{CR}_{\pi}\in [B, 2B]$ holds for $\forall \pi \in \Pi_{\overline{s}}$. 	
\end{theorem}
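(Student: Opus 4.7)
The plan is to observe that Theorem~\ref{nonsizeopt} is a direct combination of two previously established facts, applied to the restricted class $\Pi_{\overline{s}}$ of non-size based work-conserving disciplines. No new analysis is required; the content is simply noting that the two bounds are tight relative to each other within a factor of two.

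First, I would invoke Theorem~\ref{wclowbound}, which establishes that every work-conserving discipline $\pi$ satisfies $\mathcal{CR}_{\pi}\leq 2B$. Since $\Pi_{\overline{s}}\subseteq \Pi_{\mathcal{W}}$ (every non-size based work-conserving discipline is in particular work-conserving), this upper bound applies verbatim to every $\pi\in \Pi_{\overline{s}}$, yielding $\mathcal{CR}_{\pi}\leq 2B$.

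Second, I would invoke the cited fact of Motwani, Phillips and Torng, which says that any deterministic non-size based scheduling algorithm has competitive ratio at least $B$. Again, since $\Pi_{\overline{s}}$ is a subclass of non-size based algorithms, this implies $\mathcal{CR}_{\pi}\geq B$ for every $\pi\in \Pi_{\overline{s}}$. Putting the two bounds together yields $\mathcal{CR}_{\pi}\in [B,2B]$ for every $\pi\in \Pi_{\overline{s}}$, which means that all such algorithms share the same competitive ratio up to a factor of two.

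There is essentially no obstacle here: the entire content of the theorem is that the two independently-proved bounds happen to coincide up to a multiplicative constant of two when restricted to the intersection class $\Pi_{\overline{s}}$. The only thing to be careful about is verifying that the hypotheses of both results transfer cleanly, i.e. that the Motwani--Phillips--Torng lower bound is stated in the same model (multiple identical machines, preemptive or non-preemptive flow time) as Theorem~\ref{wclowbound}, so that both bounds can be meaningfully compared on the same class of instances.
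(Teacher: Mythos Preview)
Your proposal is correct and matches the paper's own approach exactly: the paper states that Theorem~\ref{nonsizeopt} follows by combining Theorem~\ref{wclowbound} (giving $\mathcal{CR}_{\pi}\le 2B$ for all work-conserving $\pi$) with the Motwani--Phillips--Torng lower bound $\mathcal{CR}_{\pi}\ge B$ for non-size based algorithms. There is nothing further to add.
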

Our conclusion can be regarded as the worst case counterpart of the well-known result for $M/G/1$ queue, which is summarized in the following Fact~\ref{mgonelemma}.
\begin{fact}
\label{mgonelemma}
For the case of an $M/G/1$ queue, all \emph{non-preemptive} and \emph{work-conserving} service orders that \emph{do not make use of job sizes} (\ie, \emph{non-size based}) have the same distribution of the number of jobs in the system~\cite{conway2003theory,harchol2013performance}.	
\end{fact}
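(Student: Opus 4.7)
The plan is to prove Fact~\ref{mgonelemma} by a sample-path coupling built on exchangeability of service times inside each busy period. First I would drive two arbitrary non-preemptive, work-conserving, non-size-based disciplines $\pi_{1},\pi_{2}$ with a common Poisson arrival process $\{r_{i}\}$ and a common i.i.d.\ sequence of service times $\{p_{i}\}$. Because both disciplines are work-conserving, the unfinished-work process $V(t)$ is a deterministic functional of $\{(r_{i},p_{i})\}$ alone (the classical Tak\'acs workload equation), so $V(t)$ and hence the sequence of busy periods, together with the set of job indices falling in each busy period, agree sample-path-wise across the two disciplines.

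The second step is to analyze a single busy period. Fix a busy period starting at $\tau_{s}$ and containing jobs with indices $j_{1},\dots,j_{k}$ arriving at $r_{j_{1}}<\cdots<r_{j_{k}}$. Under a non-preemptive non-size-based discipline, the service order --- a permutation $\sigma$ of $\{1,\dots,k\}$ --- depends only on the arrival epochs and on an exogenous randomization device, and never on the service times $p_{j_{1}},\dots,p_{j_{k}}$. Conditioned on the arrivals and on the randomization device, the service-time vector is i.i.d., hence exchangeable, and therefore the partial-sum vector
\begin{align*}
\Bigl(\textstyle\sum_{\ell=1}^{1}p_{j_{\sigma(\ell)}},\ \sum_{\ell=1}^{2}p_{j_{\sigma(\ell)}},\ \ldots,\ \sum_{\ell=1}^{k}p_{j_{\sigma(\ell)}}\Bigr)
\end{align*}
has the same (conditional) joint distribution for every non-size-based $\sigma$. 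These partial sums are, relative to $\tau_{s}$, precisely the completion epochs of the $k$ jobs inside the busy period.

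The number-in-system at time $t$ then satisfies $N(t)=A(t)-D(t)$, where $A(t)$ is the Poisson-driven arrival count (discipline-free) and $D(t)$ counts those busy-period partial sums that do not exceed $t-\tau_{s}$. Since the joint law of the partial sums is common across all non-preemptive non-size-based work-conserving disciplines, the finite-dimensional distributions of $\{N(t)\}_{t\ge 0}$ coincide under $\pi_{1}$ and $\pi_{2}$, yielding the stated equality in distribution of the number-in-system. The main technical obstacle is the conditioning step: one has to specify the scheduler's information filtration so that ``non-size based'' translates rigorously into measurability of $\sigma$ with respect to the $\sigma$-algebra generated by the arrivals and the external randomness alone. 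For canonical disciplines such as FCFS, LCFS, ROS this is immediate; in full generality it is enforced by restricting every scheduling decision to be measurable with respect to arrival times plus an independent uniform randomization source, thereby decoupling $\sigma$ from $\{p_{i}\}$ and activating exchangeability.
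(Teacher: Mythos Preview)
The paper does not supply its own proof of Fact~\ref{mgonelemma}; it is quoted as a classical result with citations to Conway--Maxwell--Miller and Harchol-Balter. So there is no in-paper argument to compare against, and I assess your proposal on its own merits.

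Your overall architecture is right: busy periods coincide sample-path-wise across all work-conserving single-server disciplines, and within a busy period the departure epochs are the running partial sums of the service times taken in the order served, so it suffices to show that this ordered sequence has a discipline-free law. But your justification for that last point contains a genuine gap. You assert that the service-order permutation $\sigma$ ``depends only on the arrival epochs and on an exogenous randomization device, and never on the service times.'' This is false already for non-preemptive LCFS: if three jobs arrive at times $0,1,2$ and the first service time is $1.5$, then at the first decision epoch only job~$2$ is present and is served next; if instead the first service time is $2.5$, both jobs~$2$ and~$3$ are present and LCFS serves job~$3$ next. The decision epoch --- and hence the set of candidates --- is a function of the sizes of the previously served jobs, so $\sigma$ is \emph{not} measurable with respect to arrivals plus external randomness alone. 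What you flag at the end as a ``technical obstacle'' in the conditioning step is therefore not a formality but an actual obstruction to the exchangeability argument as written.

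The standard fix keeps your scaffolding but replaces the independence-of-$\sigma$ claim by a sequential argument. Conditionally on the arrival process and on $S_{1},\dots,S_{i-1}$ (the first $i-1$ service times in service order), the $i$-th scheduling decision selects one of the currently waiting jobs without inspecting any of their sizes; since these uninspected sizes are i.i.d.\ with law $G$ and independent of the past, the size $S_{i}$ of whichever job is chosen is again $G$-distributed and conditionally independent of $(S_{1},\dots,S_{i-1})$. Iterating yields that $(S_{1},\dots,S_{k})$ is i.i.d.\ $G$ given the arrivals and the busy-period decomposition, regardless of the (non-size-based, non-preemptive, work-conserving) discipline. From that point your $N(t)=A(t)-D(t)$ conclusion goes through unchanged.
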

Compared with Fact~\ref{mgonelemma}, it is also important to point out that Theorem~\ref{nonsizeopt} holds in the more general setting of multiple servers and arbitrary arrival distribution. In addition, our result indeed indicates that in the worst case, both preemptive and non-preemptive non-size-based scheduling algorithms achieve almost identical competitive ratio, while Fact~\ref{mgonelemma} only applies for non-preemptive algorithms.

\section{Optimality in stochastic online completion time minimization}\label{completiontime}
%

In this section, we show the asymptotic optimality condition, utilizing our tight characterization on the worst case performance of work conserving algorithms with respect to the metric of flow time. 
We first state the following fact that will be useful for establishing Theorem~\ref{completiontheo}.

\begin{lemma}[\cite{galambos1978asymptotic}]\label{randommax} For random variable sequence $\{X_{i}\}_{i\in [n]}$, the equation
\begin{align}\label{convergenceofmax}
\lim_{n\rightarrow \infty}\frac{\max_{i\in [n]}X_{i}}{n^{1/r}}=0	
\end{align}
holds almost surely, under one of the following conditions:
\begin{itemize}
\item $\{X_{i}\}_{i\in [n]}$ are \rm{i.i.d} and $\mathbbm{E}[X_{i}^{r}]<\infty$, \ie, the $r$-th moment of $X_{i}$ is finite.
\item $\mathbbm{E}[X_{i}^{r+\epsilon}]<\infty$ holds for some $\epsilon>0$, \ie, the $(r+\epsilon)$-th moment of $X_{i}$ is finite.
\end{itemize}
In addition, $\mathbbm{P}(\frac{\max_{i\in [n]}{X_{i}}}{n^{1/r}}=\infty)=1$ when $\{X_{i}\}_{i\in [n]}$ are \rm{i.i.d} and $\mathbbm{E}[X^{r}]=\infty$.
\end{lemma}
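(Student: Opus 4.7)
The plan is to reduce the almost-sure convergence to a Borel-Cantelli argument, handling the two moment assumptions separately, and to complement this with a second-Borel-Cantelli proof of the divergence claim. Set $M_n = \max_{i\le n} X_i$ and fix $\varepsilon>0$. For the i.i.d.\ case with $\mathbb{E}[X_1^r]<\infty$, the starting point is the identity
\begin{align*}
\mathbb{E}[X_1^r] = r\int_0^\infty t^{r-1}\mathbb{P}(X_1>t)\,dt,
\end{align*}
and a dyadic shell estimate on the intervals $[\varepsilon 2^{k/r},\varepsilon 2^{(k+1)/r}]$ shows that this is equivalent to $\sum_{k\ge 0} 2^k\mathbb{P}(X_1>\varepsilon 2^{k/r})<\infty$. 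A union bound over the first $2^k$ variables gives $\mathbb{P}(M_{2^k}>\varepsilon 2^{k/r})\le 2^k\mathbb{P}(X_1>\varepsilon 2^{k/r})$, so Borel-Cantelli forces $M_{2^k}/2^{k/r}\to 0$ almost surely along the dyadic subsequence. For arbitrary $n\in[2^k,2^{k+1})$, monotonicity of $M_n$ yields $M_n/n^{1/r}\le 2^{1/r}\cdot M_{2^{k+1}}/2^{(k+1)/r}\to 0$, completing the first case.

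For the second condition, I would apply Markov's inequality and a union bound: $\mathbb{P}(M_n>\delta n^{1/r})\le\sum_{i\le n}\mathbb{E}[X_i^{r+\varepsilon}]/(\delta n^{1/r})^{r+\varepsilon}=O(n^{-\varepsilon/r})$, which requires the uniform $(r+\varepsilon)$-th moment bound implicit in the paper's Assumption~\ref{completiontimeassump}. Restricting to a polynomial subsequence $n_k=\lceil k^{2r/\varepsilon}\rceil$ makes these probabilities $O(k^{-2})$ and hence summable; Borel-Cantelli together with $n_{k+1}/n_k\to 1$ (so that the monotonicity trick above again passes from the subsequence to all $n$) then closes this case.

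For the divergence assertion, suppose $\{X_i\}$ are i.i.d.\ with $\mathbb{E}[X^r]=\infty$. The same integral--sum comparison, applied in reverse, gives $\sum_n \mathbb{P}(X_n>An^{1/r})=\infty$ for every fixed $A>0$; by independence and the second Borel-Cantelli lemma, the event $\{X_n>An^{1/r}\}$ occurs infinitely often almost surely, hence $\limsup_n M_n/n^{1/r}\ge A$ a.s. A countable union over $A\in\mathbb{N}$ then yields $\limsup_n M_n/n^{1/r}=\infty$ almost surely. The main obstacle is the quantitative matching between $\mathbb{E}[X^r]<\infty$ and summability of the dyadic tail sum $\sum_k 2^k\mathbb{P}(X_1>\varepsilon 2^{k/r})$; the dyadic shell estimate handles this cleanly, and monotonicity of $M_n$ is what makes it possible to pass from the sparse subsequence back to the full sequence without losing the a.s.\ statement.
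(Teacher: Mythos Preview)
Your proposal is correct and shares the paper's overall strategy---Borel--Cantelli for the almost-sure convergence, the second Borel--Cantelli lemma for the divergence claim, and Markov's inequality for the non-i.i.d.\ case---but the execution differs in two places worth noting.

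For the i.i.d.\ case, the paper applies Borel--Cantelli to the individual ratios $X_i/i^{1/r}$, using the sum-to-integral comparison $\sum_i \mathbb{P}(X^r \ge i\varepsilon^r)\le \mathbb{E}[X^r]/\varepsilon^r$, to conclude $\limsup_i X_i/i^{1/r}=0$ a.s.; it then passes to $M_n/n^{1/r}$ via the distributional identity $M_n \stackrel{d}{=} \max_{n<i\le 2n} X_i$ together with $\max_{n<i\le 2n} X_i/n^{1/r}\le 2^{1/r}\sup_{i>n} X_i/i^{1/r}$. Your dyadic-subsequence argument works directly with $M_{2^k}$ and avoids that distributional transfer step; both routes are valid, and yours sidesteps a small subtlety about going from equality in law to an almost-sure limit.

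For the second (non-i.i.d.) condition, your argument is actually more complete than the paper's: the paper's written proof only derives $\mathbb{P}(M_n/n^{1/r}>\varepsilon)\le C n^{-\epsilon/r}$ via Markov plus a union bound and stops there, which is convergence in probability rather than almost-sure convergence. Your polynomial-subsequence trick (choosing $n_k$ so the tail probabilities become summable, then invoking monotonicity of $M_n$ and $n_{k+1}/n_k\to 1$) is exactly the missing upgrade. The divergence argument is essentially identical to the paper's.
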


\begin{proof}
The proof of the i.i.d distributed case mainly relies on the Borel-Cantelli Lemma, and proof of the general case simply utilizes the Markov inequality. The detailed proof is deferred to Appendix~\ref{appendixmaxlemma}.
\end{proof}

We next make the following observation.
\begin{observation}\label{observationofflowtime}
The asymptotic optimal condition holds for $\forall \pi\in \Pi_{\mathcal{W}}$, if the total flow time under the optimal scheduling algorithm satisfies that $\sum_{i\in [n]}{f^{*}_i}=o(n^2/B^{(n)})$, where $B^{(n)}=\frac{\max_{i\in [n]}p_{i}}{\min_{i\in [n]}p_{i}}$ represents the job size ratio.
\end{observation}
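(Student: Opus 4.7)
The plan is to exploit the decomposition $C_i^\pi = r_i + f_i^\pi$, valid for every schedule and every instance, where $f_i^\pi$ denotes the flow time of job $i$ under $\pi$. Writing the completion-time ratio as
\[
\frac{\sum_{i\in[n]} C_i^\pi}{\sum_{i\in[n]} C_i^{\pi^*}} \;=\; \frac{\sum_{i\in[n]} r_i + \sum_{i\in[n]} f_i^\pi}{\sum_{i\in[n]} C_i^{\pi^*}} \;\leq\; 1 + \frac{\sum_{i\in[n]} f_i^\pi}{\sum_{i\in[n]} r_i},
\]
where the last inequality uses the trivial fact $\sum_i r_i \leq \sum_i C_i^{\pi^*}$ (since $C_i^{\pi^*} \geq r_i$), I reduce the asymptotic-optimality statement to showing that the final summand vanishes almost surely. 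Since the ratio is always at least $1$ by definition of $\pi^*$, this one-sided control is all that is needed.

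Next I would invoke Theorem~\ref{wclowbound} on the work-conserving discipline $\pi$, which yields $\sum_{i\in[n]} f_i^\pi \leq 2 B^{(n)} \sum_{i\in[n]} f_i^{\pi^*}$. This collapses the task to verifying
\[
\frac{B^{(n)} \sum_{i\in[n]} f_i^{\pi^*}}{\sum_{i\in[n]} r_i} \;\longrightarrow\; 0 \quad \text{almost surely.}
\]
The numerator is $o(n^2)$ almost surely by the hypothesis of the observation. For the denominator, I would apply Assumption~\ref{completiontimeassump}: the interarrival times $\{\Delta r_i\}$ are independent with a common mean $\mu_r > 0$ and uniformly bounded second moments, so Kolmogorov's SLLN for independent variables gives $r_n/n \to \mu_r$ almost surely, and a Ces\`aro-type summation then yields $\sum_{i\in[n]} r_i / n^2 \to \mu_r/2$ almost surely. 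In particular $\sum_{i\in[n]} r_i = \Theta(n^2)$ on a set of full measure.

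Combining the two estimates shows that the bounding ratio tends to zero almost surely, so $\sum_i C_i^\pi / \sum_i C_i^{\pi^*} \to 1$ a.s. for every work-conserving $\pi$. The main obstacle I foresee is the rigorous passage from Assumption~\ref{completiontimeassump} to the almost-sure lower bound $\sum_i r_i = \Omega(n^2)$; this requires that the second moments of the $\Delta r_i$ be uniformly bounded so that Kolmogorov's SLLN applies in the independent but not necessarily identically distributed setting. Once this standard but slightly delicate step is in hand, the remainder of the argument is a direct calculation driven by Theorem~\ref{wclowbound} and the hypothesis $\sum_i f_i^{\pi^*} = o(n^2/B^{(n)})$.
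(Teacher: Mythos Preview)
Your proposal is correct and follows essentially the same route as the paper: decompose $C_i=r_i+f_i$, bound $\sum_i f_i^\pi$ by $O(B^{(n)})\cdot\sum_i f_i^{\pi^*}$ via Theorem~\ref{wclowbound}, and show $\sum_i r_i=\Omega(n^2)$ almost surely so that the flow-time contribution is negligible. The only minor difference is in the last step: the paper bounds $\sum_i r_i\ge \tfrac{n}{2}\sum_{i\le n/2}\Delta r_i$ directly and then applies Chebyshev, whereas you go through Kolmogorov's SLLN for $r_n/n$ followed by a Ces\`aro argument---a slightly cleaner path to the same $\Theta(n^2)$ conclusion.
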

\begin{proof}
See Appendix~\ref{appendixob}.
\end{proof}

\paragraph{Remark.} We derive an $\Omega(n^{2})$ lower bound on optimal total completion time in the proof of Observation~\ref{observationofflowtime}, which still holds without any assumptions on the arrival process, if there is a lower bound $\Delta=\Theta(1)$ on job workload. To see this fact, we re-index the jobs by their completion time order as $C_{\sigma_{k}}\leq C_{\sigma_{k+1}}\;(\forall k\in [n-1])$, then we have $C_{\sigma_{k}}\geq \frac{\sum_{\ell \in [k]}{p_{\sigma_{\ell}}}}{m}$ and
\begin{align*}
\sum_{k\in [n]}{C_{k}}\geq \frac{1}{m}\sum_{k\in [n]}{\sum_{\ell \in [k]}{p_{\sigma_{\ell}}}}\geq \frac{(n+1)n}{2m}\cdot \Delta=\Omega(n^{2}).
\end{align*}

\subsection{Lower bound on the optimal flow time}

\begin{lemma}\label{singleopt}
For a single server system with interarrival time $\{\Delta r_{k}\}_{k\in [n]}$ and job workload  $\{p_{k}\}_{k\in [n]}$,
\begin{align*}
\mathbbm{P}\Big(\lim_{n\rightarrow \infty} \frac{\sum_{k\in [n]}{f^{*}_{i}}}{n^{3/2+\epsilon}} =0\Big)=1,
\end{align*}
for any $\epsilon>0$, if
\begin{align}\label{singlecon}
\frac{\sum_{k\in [n]}{\mu^{(k)}_{v}\cdot \mathbbm{1}_{\mu^{(k)}_{v}>0}}}{n} =o(n^{-1/2+\epsilon}).
\end{align}
\end{lemma}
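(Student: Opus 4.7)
The plan is to upper-bound the optimal total flow time by that of first-come-first-served (FIFO), which is a specific feasible schedule, and then exploit Lindley's recursion. Writing $X_k := p_k - \Delta r_{k+1}$ with $\mu^{(k)}_v := \mathbbm{E}[X_k]$ the per-job drift, the FIFO waiting time satisfies $W_{k+1}=(W_k+X_k)^+$, which unrolls via Loynes' identity to $W_k = S_k - \min_{1\leq j\leq k}S_j$, where $S_j := \sum_{i<j}X_i$. Since the FIFO flow time of job $k$ is $f^{\mathrm{FIFO}}_k = W_k + p_k$, and $\sum_k p_k = O(n)$ almost surely by Kolmogorov's strong law (the $(2+\epsilon)$-th moment assumption on $p_i$ provides uniformly bounded variance), it suffices to show that $\max_{k\leq n} W_k = o(n^{1/2+\epsilon})$ almost surely; this then gives $\sum_{k\leq n} W_k \leq n\cdot \max_k W_k = o(n^{3/2+\epsilon})$ a.s.

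To control $W_k$, I would split $S_j = T_j + M_j$ into its deterministic mean $T_j := \sum_{i<j}\mu^{(i)}_v$ and its zero-mean centered part $M_j := S_j - T_j$. Using $\min(a+b)\geq \min a + \min b$ yields the elementary decomposition
\begin{equation*}
W_k \;\leq\; \max_{1\leq j\leq k}\sum_{i=j}^{k-1}\mu^{(i)}_v \;+\; 2\max_{j\leq k}|M_j|.
\end{equation*}
The deterministic piece is dominated by $\sum_{i=1}^{n-1}\mu^{(i)}_v\mathbbm{1}_{\mu^{(i)}_v>0}$, which is $o(n^{1/2+\epsilon})$ uniformly in $k$ by hypothesis~(\ref{singlecon}). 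For the centered piece, independence of $\{p_i\}$ and $\{\Delta r_i\}$, together with the finite second-moment assumption, makes $\{M_j\}$ a zero-mean martingale with $\mathrm{Var}(M_n)=O(n)$, so Kolmogorov's maximal inequality gives $\mathbbm{P}(\max_{j\leq n}|M_j|>t) = O(n/t^2)$. A dyadic Borel--Cantelli argument along $n=2^m$ with threshold $t = 2^{m(1/2+\epsilon/2)}$ (summable tails $O(2^{-m\epsilon})$) promotes this to $\max_{j\leq 2^m}|M_j| = O(2^{m(1/2+\epsilon/2)})$ almost surely, and monotonicity of $n\mapsto \max_{j\leq n}|M_j|$ interpolates to arbitrary $n$, giving $\max_{j\leq n}|M_j| = o(n^{1/2+\epsilon})$ a.s.

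The principal obstacle is the fluctuation step: because $\{p_i\}$ and $\{\Delta r_i\}$ are only independent (not identically distributed), classical almost-sure results such as the law of the iterated logarithm do not apply directly, and one must handle heterogeneous variances through Kolmogorov's inequality followed by a summability argument on a dyadic skeleton. The key conceptual move that makes the argument go through is decoupling the deterministic drift, which is controlled \emph{exclusively} by hypothesis~(\ref{singlecon}), from the mean-zero martingale fluctuation, which is controlled by the moment bounds of Assumption~\ref{completiontimeassump}; each component is then handled by the hypothesis tailored for it.
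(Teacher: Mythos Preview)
Your proposal is correct and shares the paper's core strategy: bound $F^*$ by the FCFS flow time, invoke Lindley's recursion to write $W_k$ as a running maximum of partial sums, then split into a deterministic drift part (controlled by hypothesis~(\ref{singlecon})) and a centered fluctuation part (controlled by Kolmogorov's maximal inequality on the independent, finite-variance increments). The paper organizes the computation slightly differently: rather than bounding $\max_{k\le n}W_k$ and multiplying by $n$, it decomposes the \emph{sum} $\sum_k W_k$ directly into two pieces $\Sigma_1=\sum_k\bigl(T_k-\sum_{i\le k}\mu^{(i)}_v\bigr)$ and $\Sigma_2=\sum_k\bigl(\max_{i\le k}(-T_i)+\sum_{i\le k}\mu^{(i)}_v\bigr)$, and bounds each in probability by $o(n^{3/2+\epsilon})$ via Chebyshev and Kolmogorov respectively. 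Your per-job decomposition $W_k\le(\text{drift})+2\max_{j\le k}|M_j|$ is arguably cleaner and more transparent, at the cost of the crude step $\sum_k W_k\le n\max_k W_k$; the paper's direct sum bound avoids this but requires more bookkeeping. One point where your write-up is actually more careful than the paper: you explicitly upgrade the in-probability tail bounds to almost-sure convergence via a dyadic Borel--Cantelli skeleton, whereas the paper passes from Chebyshev/Kolmogorov bounds to ``almost surely'' without spelling out that step.
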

\begin{proof}
In this proof, we let $W_{k}$ denotes the waiting time of the $k$-th arriving job under \emph{first come first serve} (FCFS) discipline. For general input job workload and arrival time distributions, we have the following recursion according to Lindley equation~\cite{asmussen2008applied},
\begin{align*}
W_{k+1}=(W_{k}+p_{k}-\Delta r_{k})^{+}\equiv (W_{k}+v_{k})^{+},
\end{align*}
where we let $v_{k}= p_{k}-\Delta r_{k}$, $\mu^{(k)}_{v}=\mathbbm{E}[v_{k}]=\mu^{(k)}_{p}-\mu^{(k)}_{r}$ for $\forall k\in [n]$ and $x^{+}=\max\{x, 0\}$ for $\forall x\in \mathbbm{R}$. Solving the recursive equation, it can be shown that~\cite{asmussen2008applied},
\begin{align*}
W_{n}=\max\Big\{T_{n}, T_{n}-T_{1},T_{n}-T_{2},\ldots,T_{n}-T_{n-1}, 0\Big\}=T_{n}-\min_{k\in [n]}{T_{k}},
\end{align*}
where $T_{k}=\sum_{i\in [k]}{v_{i}} \; (\forall k\in [n])$. Hence the total waiting time under FCFS is,
\begin{align*}
\sum_{k\in [n]}{W_{k}}&=\sum_{k\in [n]}{\Big[T_{k}-\min_{i\in [k]}{T_{i}}\Big]},\\
&=\underbrace{\sum_{k\in [n]}{\Big[T_{k}-\sum_{i\in [k]}{\mu^{(i)}_{v}} \Big]}}_{\Sigma_{1}}+ \underbrace{\sum_{k\in [n]}\Big[ \max_{i\in [k]}{\{-T_{i}\}}-\sum_{i\in [k]}{(-\mu^{(i)}_{v})}\Big] \Big\}}_{\Sigma_{2}},
\end{align*}

Note that
\begin{align*}
\frac{\Sigma_{1}}{n^{3/2+\epsilon}}&=\frac{ \sum_{k\in [n]} \Big\{T_{k}-\sum_{i\in [k]}{\mu^{(i)}_{v}} \Big\} }{n^{3/2+\epsilon}}\\
&=\sum_{k\in [n]}{\Big(\frac{1}{n^{1/2+\epsilon}}-\frac{k-1}{n^{3/2+\epsilon}}\Big) (v_{k}-\mu^{(k)}_{v})}.
\end{align*}
Applying the Chebyshev inequality, we know that for any $\varepsilon>0$,
\begin{align}\label{sigma1ineq}
\lim_{n\rightarrow \infty} \mathbbm{P}\Big( \frac{\Sigma_{1}}{n^{3/2+\epsilon}}\geq \varepsilon \Big) \leq &\lim_{n\rightarrow \infty} \frac{\var \Big( \sum_{k\in [n]}{\Big(\frac{1}{n^{1/2+\epsilon}}-\frac{k-1}{n^{3/2+\epsilon}}\Big) (v_{k}-\mu^{(k)}_{v})} \Big)}{\varepsilon^{2}}\notag\\
\leq &\lim_{n \rightarrow \infty} {\frac{\sum_{k\in [n]}{\var(v_{k}-\mu^{(k)}_{v})} }{n^{1+2\epsilon}\varepsilon^{2}}}\notag \\
=&\lim_{n \rightarrow \infty} { \frac{\sum_{k\in [n]} {({\sigma_{r}^{(k)}}^{2}}+{\sigma_{p}^{(k)}}^{2}) }{n^{1+2\epsilon}\varepsilon^{2}}}=0,
\end{align}
where the last inequality holds as $\sup_{n}\{{\sigma^{(n)}_{r}}^{2}, {\sigma^{(n)}_{p}}^{2} \}<\infty$. Hence $\lim_{n\rightarrow \infty}{\frac{\Sigma_{1}}{n^{3/2}}}= 0$ almost surely.

On the other hand, we have the following inequality for $\lambda=\Theta(n^{1/2+\epsilon})$,
\begin{align*}
\mathbbm{P}\Big( \Sigma_{2}\leq \lambda n  \Big)&= \mathbbm{P}\Big( \sum_{k\in [n]}{\Big[ \sum_{i\in [k]}{\mu^{(i)}_{v}}-\min_{i\in [k]}{T_{i}}\Big]}  \leq \lambda n\Big)\\
& \geq \mathbbm{P}\Big(\sum_{i\in [k]}{\mu^{(i)}_{v}}-\min_{i\in [k]}{T_{i}} \leq \lambda, \forall k\in [n] \Big)\\
& = \mathbbm{P}\Big(T_{i}\geq \sum_{j\in [k]}{\mu^{(j)}_{v}}-\lambda, \forall i\in [k], k\in [n] \Big)\\
&= \mathbbm{P}\Big( T_{i}\geq \sum_{j\in [k]}{\mu^{(j)}_{v}\cdot \mathbbm{1}_{\mu^{(j)}_{v}<0}}-\lambda^{\prime}_{k}, \forall i\in [k], k\in [n]   \Big),
\end{align*}
where $\lambda^{\prime}_{k}=\lambda-\sum_{j\in [k]}{\mu^{(j)}_{v}\cdot \mathbbm{1}_{\mu^{(j)}_{v}>0}}\; (\forall k\in [n])$. Combining with the facts that 
\begin{align*}
\lambda^{\prime}_{k}\leq \lambda^{\prime}=\lambda-\sum_{j\in [n]}{\mu^{(j)}_{v}\cdot \mathbbm{1}_{\mu^{(j)}_{v}>0}}
\end{align*}
for $\forall k\in [n]$, and $\sum_{i\in [k]}{\mu^{(i)}_{v}\cdot \mathbbm{1}_{\mu^{(i)}_{v}<0}}$ is non-increasing with respect to the index $k$, we further have
\begin{align}
\mathbbm{P}\Big( \Sigma_{2}\leq \lambda n  \Big)& \geq\mathbbm{P}\Big( T_{k}\geq \sum_{i\in [k]}{\mu^{(i)}_{v}\cdot \mathbbm{1}_{\mu^{(i)}_{v}<0}}-\lambda^{\prime}, \forall  k\in [n] \Big)\notag\\
&= \mathbbm{P}\Big(\min_{k\in [n]}\Big\{T_{k}-\sum_{i\in [k]}{\mu^{(i)}_{v}\cdot \mathbbm{1}_{\mu^{(i)}_{v}<0}} \Big\}+\lambda^{\prime} \geq 0 \Big)\notag\\
&\geq \mathbbm{P}\Big(\min_{k\in [n]}\Big\{T_{k}-\sum_{i\in [k]}{\mu^{(i)}_{v}} \Big\} +\lambda^{\prime} \geq 0\Big)\notag\\
& \geq \mathbbm{P} \Big( \max_{k\in [n]}\Big | T_{k}-\sum_{i\in [k]}{\mu^{(i)}_{v}}  \Big| \leq \lambda^{\prime} \Big)\notag\\
& \geq 1-\frac{ \sum_{k\in [n]} {({\sigma_{r}^{(k)}}^{2}}+{\sigma_{p}^{(k)}}^{2}) }{{\lambda^{\prime}}^{2}}\rightarrow 0, \label{kolmo}
\end{align}
where the last inequality follows from Kolmogorov's inequality~\cite{Durrett:2010:PTE:1869916}. Hence $\frac{\Sigma_{2}}{n^{3/2}}$ also converges to $0$ almost surely. Combined with (\ref{sigma1ineq}), we have
\begin{align*}
\mathbbm{P}\Big( \lim_{n\rightarrow \infty}\frac{\sum_{k\in [n]}{W_{k}}}{n^{3/2+\epsilon}}=0 \Big)=1.
\end{align*}
The minimum total flow time is no more than that incurred by FCFS, \ie,
\begin{align*}
F^{*}\leq F_{\mathrm{FCFS}}=\sum_{k\in [n]}{(W_{k}+p_{k})}=o(n^{3/2+\epsilon}), w.p.1.
\end{align*}
The proof is complete.

\end{proof}

\begin{proposition}\label{singlestablepro}
If the single server system is stable, \ie, $\mathbbm{P}(\lim_{n\rightarrow \infty}W_{n}<\infty)=1$, then
\begin{align}\label{singleserverstable}
\frac{\sum_{k\in [n]}{\mu^{(k)}_{v}\cdot \mathbbm{1}_{\mu^{(k)}_{v}>0}  } }{n} =o(n^{-1/2}).
\end{align}
\end{proposition}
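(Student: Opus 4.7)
The plan is to establish the proposition by combining the Lindley representation developed in the proof of Lemma~\ref{singleopt} with Kolmogorov's maximal inequality applied to the centered walk, and then extract information about the positive part $M_n := \sum_{k\in [n]} \mu_v^{(k)} \mathbbm{1}_{\mu_v^{(k)}>0}$. First I would reuse the notation of the preceding lemma and write $T_k = \sum_{i\in [k]} v_i$, $S_k = \sum_{i\in [k]} \mu_v^{(i)}$, so that the Lindley identity gives $W_{n+1} = T_n - \min_{0\leq k\leq n} T_k$. The stability hypothesis $\mathbbm{P}(\lim_n W_n < \infty) = 1$ then implies $\sup_n (T_n - \min_{k\leq n} T_k) < \infty$ almost surely.

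The second step is to split $T_n = S_n + D_n$ with $D_n = \sum_{i\in [n]}(v_i - \mu_v^{(i)})$ a centered sum of independent increments. The standing variance assumptions (finite $(2+\epsilon)$-th moment of $p_i$ and second moment of $\Delta r_i$) bound $\sup_k \mathrm{Var}(v_k)$ uniformly, so that $\mathrm{Var}(D_k) = O(k)$. Kolmogorov's maximal inequality, applied exactly as in the bound (\ref{kolmo}) in the previous proof, yields $\max_{k\leq n} |D_k| = O(\sqrt{n})$ in probability, and a standard subsequence/Borel--Cantelli argument upgrades this to an almost sure bound of the same order along the blocks $[2^j, 2^{j+1}]$. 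Combining this with the finiteness of $\sup_n (T_n - \min_{k\leq n} T_k)$ forces the \emph{deterministic} sequence to satisfy $S_n - \min_{k\leq n} S_k = o(\sqrt{n})$: any deterministic excess of larger order than $\sqrt{n}$ would surface in $W_n$ because the random fluctuations cannot perpetually cancel it.

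The third and hardest step is to pass from the above range-type bound on $S_n$ to the desired bound $M_n = o(\sqrt{n})$. This requires handling the fact that $M_n$ is the total upward variation of $S$ rather than its range, and in principle the two can differ if $S$ oscillates. To close this gap I would partition $[n]$ into maximal runs of constant sign of $\mu_v^{(k)}$ and argue that each positive run of aggregate size $\Delta^+$ pushes $S$ upward by $\Delta^+$, which must be cancelled by a subsequent negative run before $S - \min S$ can grow beyond $o(\sqrt{n})$. Telescoping over all positive blocks and invoking the non-negativity of each $\mu_v^{(k)} \mathbbm{1}_{\mu_v^{(k)}>0}$ then identifies $M_n$ with the total up-variation that must be compensated, and the range bound transfers to yield $M_n = o(\sqrt{n})$, i.e.\ $M_n/n = o(n^{-1/2})$.

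The main obstacle is this last transfer: the bound derived from stability is naturally on $S_n - \min_{k\leq n} S_k$, but $M_n$ can be strictly larger than the range of $S_n$ in oscillatory regimes, so a careful block decomposition (or an alternative argument exploiting that every positive run is followed by a negative run of comparable magnitude, using Assumption~\ref{assumptionstable}) is needed. A minor but non-trivial technical point is making the a.s.\ upgrade of the Kolmogorov bound compatible with the deterministic conclusion required, which I plan to handle by taking $\epsilon>0$ arbitrarily small and invoking a diagonal argument over $n$.
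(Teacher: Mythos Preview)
Your approach is essentially identical to the paper's: both use the Lindley representation $W_{n}=T_{n}-\min_{k\leq n}T_{k}$, split $T_{k}$ into the deterministic drift $S_{k}=\sum_{i\leq k}\mu_{v}^{(i)}$ plus the centered walk $D_{k}=\sum_{i\leq k}(v_{i}-\mu_{v}^{(i)})$, and control the fluctuations of $D$ via Kolmogorov's inequality (the paper does this by pointing back to the proof of Lemma~\ref{singleopt}). The only cosmetic difference is that the paper argues by contrapositive --- it assumes $M_{n}\neq o(\sqrt{n})$ and writes $W_{n}=D_{n}+\bigl[\max_{k}(-T_{k})-N_{n}\bigr]+M_{n}$ with $N_{n}=\sum_{k}(-\mu_{v}^{(k)})\mathbbm{1}_{\mu_{v}^{(k)}<0}$, then asserts the bracket is $o(n^{1/2+\epsilon})$ so that $W_{n}/n^{1/2+\epsilon}=\Omega(1)$ --- whereas you argue directly from stability.

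The ``main obstacle'' you flag is real, and it is exactly the point at which the paper's own proof is thin. Kolmogorov gives only the \emph{upper} bound $\max_{k}(-T_{k})\leq N_{n}+\max_{k}|D_{k}|$; the lower bound on the bracket, which the contrapositive needs, is never supplied. In oscillatory regimes where $\mu_{v}^{(k)}$ changes sign frequently one has $\max_{k}(-T_{k})-N_{n}=-N_{n}+O(\sqrt{n})$, which can be arbitrarily negative, so the paper's ``$o(1)+o(1)+\Omega(1)$'' line does not go through as written. Your proposed sign-run block decomposition does not close this gap either: if $\mu_{v}^{(k)}$ alternates $\pm a$, then $M_{n}=\Theta(n)$ while the range of $S$ is $O(1)$, so no telescoping over blocks recovers $M_{n}$ from the range. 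In that example the system is still unstable --- $W_{n}$ inherits the $\Theta(\sqrt{n})$ range of the centered walk $D$ --- so the proposition itself survives, but a complete proof must invoke that mechanism (divergence of the reflected centered walk) rather than rely on $M_{n}$ alone. Neither your outline nor the paper's argument does this.
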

\begin{proof}
See Appendix~\ref{appendixpfstab}.
\end{proof}

\begin{proposition}\label{multistable}
The stability of a multiple server system with job workload $\{p_{k}\}_{k\in [n]}$ and interarrival time $\{\Delta r_{k}\}_{k\in [n]}$ implies that
\begin{align}\label{multiplestab}
\frac{\sum_{k\in [n]}{(\mu^{(k)}_{p}-m\mu^{(k)}_{r})^{+}}}{n} =o(n^{-1/2+\epsilon}).
\end{align}


\end{proposition}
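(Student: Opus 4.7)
The plan is to reduce Proposition~\ref{multistable} to its single-server analogue (Proposition~\ref{singlestablepro}) by means of a sample-path coupling that replaces the $m$ unit-rate servers with a single server running at rate $m$. Since the proposition's conclusion is identical to that of Proposition~\ref{singlestablepro} after dividing inside the $(\cdot)^+$ by $m$, the only thing one really has to supply is a sense in which multi-server stability forces the appropriate ``scaled'' single-server system to be stable as well.

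First I would introduce a fictitious single-server system that shares the arrival epochs $\{r_k\}$ and workloads $\{p_k\}$ of the given $m$-server system but whose server processes work at rate $m$. Equivalently, this is a standard unit-rate FCFS queue with scaled workloads $\tilde p_k=p_k/m$ and the same interarrival times $\Delta r_k$, so that $\tilde\mu_p^{(k)}=\mu_p^{(k)}/m$, $\tilde\mu_r^{(k)}=\mu_r^{(k)}$, and $\tilde\mu_v^{(k)}=\mu_p^{(k)}/m-\mu_r^{(k)}$. Next I would compare the total remaining workloads $W^{(m)}(t)$ and $\tilde W(t)$ of the two systems sample-path-wise: both start at $0$, both jump by $p_k$ at time $r_k$, and between arrivals $W^{(m)}(t)$ depletes at rate $\min\{n_\pi(t),m\}\le m$ while $\tilde W(t)$ depletes at rate $m$ whenever $\tilde W(t)>0$. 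A short inductive argument on the event-driven phases then yields $\tilde W(t)\le W^{(m)}(t)$ for all $t\ge 0$.

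Third, Assumption~\ref{stableassump} gives $\sup_t W^{(m)}(t)<\infty$ almost surely, hence $\sup_t \tilde W(t)<\infty$ almost surely. Because the fictitious queue is single-server FCFS, the waiting time $\tilde W_n$ of the $n$-th arrival coincides with the workload seen at $r_n^-$, so $\mathbbm{P}(\lim_n \tilde W_n<\infty)=1$ and the fictitious queue satisfies the stability hypothesis of Proposition~\ref{singlestablepro}. Applying that proposition yields $\frac{1}{n}\sum_{k\in [n]}\tilde\mu_v^{(k)}\mathbbm{1}_{\tilde\mu_v^{(k)}>0}=o(n^{-1/2})$, and multiplying by $m$ together with the identity $(\mu_p^{(k)}-m\mu_r^{(k)})^+=m\bigl(\mu_p^{(k)}/m-\mu_r^{(k)}\bigr)^+$ yields the claimed bound $o(n^{-1/2})\subseteq o(n^{-1/2+\epsilon})$.

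The main obstacle is the coupling step. One must verify carefully (i) that the workload dominance $\tilde W\le W^{(m)}$ really persists under the event dynamics, in particular through the phases where the $m$-server system momentarily has fewer than $m$ active jobs and the fictitious single server therefore drains strictly faster, and (ii) that the multi-server stability assumption translates into almost-sure boundedness of the \emph{workload} process, not merely of the number of jobs alive, so that the hypothesis of Proposition~\ref{singlestablepro} is genuinely in force; the latter relies on the elementary bound that the residual workload under any work-conserving $m$-server schedule is controlled by $m\cdot p_{\max}$ plus the remaining workload of any single designated job, which in a stable system stays finite. Once these two points are settled, the remainder of the argument is a purely algebraic rescaling.
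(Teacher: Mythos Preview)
Your approach is essentially the same as the paper's: it too introduces a single server running at rate $m$, observes sample-path-wise that its remaining workload never exceeds that of the $m$-server system, and then invokes Proposition~\ref{singlestablepro}. Your write-up is in fact more careful than the paper's terse argument, since you explicitly flag (and resolve) the issue of translating multi-server stability into almost-sure boundedness of the workload process.
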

\begin{proof}
We consider a single server system $\Sigma^{*}$ with the same input distributions of interarrival and service time, while the server is $m$ times as fast as that in the multiple server system $\Sigma^{(m)}$. We remark that the remaining workload in this single server system is always no more than that in $\Sigma^{(m)}$, since the the server in $\Sigma^{(*)}$ always reduces the workload at the same rate as the case when all the $m$ servers in $\Sigma^{(m)}$ are busy, while the newly arriving jobs in these two systems are identical. Hence if the remaining workload in $\Sigma^{*}$ goes to infinity, then the remaining workload in $\Sigma^{(m)}$ is also unbounded,
thus condition (\ref{multiplestab}) follows from Proposition~\ref{singlestablepro}.
\end{proof}
\begin{lemma}\label{flowtimebound}
For any input instance, the optimal flow time satisfies that
\begin{align}
F_{\pi^{*}}=\sum_{i\in [n]}{f^{*}_{i}}=o(n^{3/2+\epsilon}),
\end{align}
under Assumption~\ref{completiontimeassump}, together with  Assumption~\ref{assumptionrho} or \ref{assumptionstable}.
\end{lemma}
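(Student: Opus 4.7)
The plan is to upper bound $\sum_i f^{*}_i$ by the total flow time of a non-optimal but tractable schedule $\pi_{\mathrm{RR}}$ that decouples the multi-machine instance into $m$ independent single-server queues, and then apply Lemma~\ref{singleopt} to each. Specifically, I would let $\pi_{\mathrm{RR}}$ denote \emph{round-robin FCFS}: route job $k$ to machine $j(k)=((k-1)\bmod m)+1$, and serve its assigned jobs in first-come-first-served order on that machine. Since $\pi^{*}$ is the overall optimum with no restriction to work-conserving schedules, $\sum_i f^{*}_i \le F_{\pi_{\mathrm{RR}}} = \sum_{j=1}^{m} F_j$, where $F_j$ is the FCFS flow time on machine $j$ viewed in isolation.

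Next, I would verify that the hypotheses of Lemma~\ref{singleopt} apply to each sub-queue. On machine $j$ the workload sub-sequence $\{p_{(k-1)m+j}\}_{k}$ and the effective interarrival sequence $\{\sum_{i=(k-2)m+j+1}^{(k-1)m+j}\Delta r_i\}_{k}$ inherit independence and the moment conditions of Assumption~\ref{completiontimeassump} (sub-sequences of independent variables, and sums of $m$ independent terms with finite second moments still have finite second moment). The only non-trivial condition to check is the drift bound (\ref{singlecon}), which for machine $j$ reads $\sum_{k}(\tilde{\mu}^{(j,k)}_v)^+ = o(n_j^{1/2+\epsilon})$ with $\tilde{\mu}^{(j,k)}_v = \mu^{((k-1)m+j)}_p - \sum_{i=(k-2)m+j+1}^{(k-1)m+j}\mu^{(i)}_r$. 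Under identical interarrival means this simplifies to $\mu^{(\ell)}_p - m\mu_r$ (with $\ell=(k-1)m+j$), so summing over $j,k$ gives exactly $\sum_\ell(\mu^{(\ell)}_p - m\mu^{(\ell)}_r)^+$, and Proposition~\ref{multistable} (whose hypothesis holds under Assumption~\ref{stableassump} or equivalently Assumption~\ref{assumptionrho}) bounds this quantity by $o(n^{1/2+\epsilon})$; since $n_j=\Theta(n)$ and $m$ is a constant, the required bound holds for each $j$ individually.

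Applying Lemma~\ref{singleopt} to each sub-queue then yields $F_j = o(n^{3/2+\epsilon})$ almost surely, and summing over the $m$ (constantly many) machines gives $\sum_i f^{*}_i \le F_{\pi_{\mathrm{RR}}} = o(n^{3/2+\epsilon})$ almost surely, as claimed. The main obstacle I anticipate is the drift verification above when the interarrival means are non-identical (the general case of Assumption~\ref{assumptionstable}): the sliding-window aggregation $\sum_{i=\ell-m+1}^{\ell}\mu^{(i)}_r$ appearing in $\tilde{\mu}^{(j,k)}_v$ does not literally coincide with the pointwise $m\mu^{(\ell)}_r$ produced by Proposition~\ref{multistable}. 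I would address this with a short comparison: in the near-identical sub-case of Assumption~\ref{assumptionstable}, the per-term window-to-point correction is $o(n^{-1/2})$ and contributes an aggregate $o(n^{1/2})$ that is absorbed in the target order; in the non-decreasing sub-case, $\sum_{i=\ell-m+1}^{\ell}\mu^{(i)}_r \le m\mu^{(\ell)}_r$, so Proposition~\ref{multistable} directly dominates the window version and the same conclusion follows.
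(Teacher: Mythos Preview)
Your approach is exactly the paper's: reduce to $m$ single-server queues via cyclic (round-robin) assignment and apply Lemma~\ref{singleopt} to each sub-queue, verifying the drift condition~(\ref{singlecon}) from the multi-server hypotheses. The decoupling, the moment/independence check, and the final summation over the $m=\Theta(1)$ machines are all the same.

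There is one concrete slip in your drift verification for the non-decreasing sub-case. With the \emph{backward} window you chose, $\sum_{i=\ell-m+1}^{\ell}\mu^{(i)}_r \le m\mu^{(\ell)}_r$ is indeed correct when $\{\mu^{(k)}_r\}$ is non-decreasing, but this makes the window drift $\bigl(\mu^{(\ell)}_p - \sum_{i=\ell-m+1}^{\ell}\mu^{(i)}_r\bigr)^{+}$ \emph{larger} than the point drift $\bigl(\mu^{(\ell)}_p - m\mu^{(\ell)}_r\bigr)^{+}$, so Proposition~\ref{multistable} does not dominate it --- the inequality runs the wrong way. The paper sidesteps this by pairing $p^{(i)}_j$ with the \emph{forward} window $\sum_{s=0}^{m-1}\mu^{((j-1)m+i+s)}_r$; in the non-decreasing case the forward window is $\ge m\mu^{(\ell)}_r$, so $(m\mu^{(\ell)}_r-\text{forward window})^{+}=0$ and the comparison with Proposition~\ref{multistable} goes through via $(x+y)^{+}\le x^{+}+y^{+}$. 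This is only an index shift in the Lindley pairing (harmless asymptotically), but it is precisely what fixes the direction you need. A second minor point: Assumption~\ref{assumptionrho} is not equivalent to stability and is not routed through Proposition~\ref{multistable}; the paper handles it directly by observing $\rho^{(n)}\le 1+o(n^{-1/2})$ forces each term $\bigl(\mu^{(k)}_p-\sum_{i=k}^{k+m-1}\mu^{(i)}_r\bigr)^{+}=o(n^{-1/2})$, so the sum is $o(n^{1/2})$ without appealing to stability.
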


\begin{proof}
To show the analytic bound on the optimal flow time, in the following we reduce the problem to the case $m=1$, by utilizing the \emph{first come first serve} (FCFS) rule as the benchmark algorithm to obtain suitable upper bounds in the single server system. Specifically, consider the simple cyclic job allocation scheme, in which the $j$-th arriving job goes to server $\sigma_{j}\equiv j\;(\bmod \; m)$. Let $A_{i}$ denote the set of jobs allocated to server $i$ and $n_{i}$ denote the size of $A_{i}$, \ie, $n_{i}=|A_{i}|\in\{\lfloor\frac{n}{m}\rfloor, \lfloor\frac{n}{m}\rfloor+1\}=\Theta(n)$ and $\sum_{i\in [m]}{n_{i}}=n$. Then jobs arrives at server $i$ with interarrival time $\{\Delta r^{(i)}_{j}\}_{j \in [n_{i}]}$ and workload $\{p^{(i)}_{j}\}_{j \in [n_{i}]}$ given by
\begin{align}
 \; p^{(i)}_{j}=p_{(j-1)m+i}, \bar{p}^{(i)}_{j}&=\mathbbm{E}[p^{(i)}_{j}]=\mu^{((j-1)m+i)}_{p},\label{definitionofpdelta1}\\
 \Delta r^{(i)}_{j}=\sum_{s=0}^{m-1}{\Delta r_{(j-1)m+i+s}},  \bar{\Delta} r^{(i)}_{j}&=\mathbbm{E}[\Delta r^{(i)}_{j}]=\sum^{m-1}_{s=0}{\mu^{((j-1)m+i+s)}_{r}},\label{definitionofpdelta2}
\end{align}
where we denote $\Delta r_{k}=0$ for $k<0$.

We first show that for $\forall i\in [m]$,
\begin{align}\label{singlequecon}
\sum_{j\in A_{i}}{(\bar{p}^{(i)}_{j}-\bar{\Delta} r^{(i)}_{j})^{+}}=o(n^{1/2+\epsilon}_{i}),
\end{align}
for which it suffices to prove that $\sum_{k\in [n]}{(\mu^{(k)}_{p}-\sum_{i=k}^{k+m-1}{\mu^{(i)}_{r}})^{+}}=o(n^{1/2+\epsilon})=o(n^{1/2+\epsilon}_{i})$, as it can be seen that
\begin{align*}
\sum_{j\in A_{i}}{(\bar{p}^{(i)}_{j}-\bar{\Delta} r^{(i)}_{j})^{+}}\leq \sum_{k\in [n]}{\Big(\mu^{(k)}_{p}-\sum_{i=k}^{k+m-1}{\mu^{(i)}_{r}}\Big)^{+}}\end{align*}
holds for $\forall i\in [m] $ according to the definitions in (\ref{definitionofpdelta1})-(\ref{definitionofpdelta2}). Observe that under the cyclic allocation,
\begin{itemize}
\item $\rho^{(n)}=\sup\frac{\mu^{(n)}_{p}}{\sum_{k=n}^{n+m-1}{\mu^{(k)}_{r}}}\leq 1+o(n^{-1/2})$ implies that $(\mu^{(n)}_{p}-\sum_{i=n}^{n+m-1}{\mu^{(k)}_{r}})^{+}\leq o(n^{-1/2})\cdot \sum_{i=n}^{n+m-1}{\mu^{(k)}_{r}}=o(n^{-1/2})\cdot \sup\mu^{(n)}_{r}$, consequently we have $\sum_{k\in [n]}{(\mu^{(k)}_{p}-\sum_{i=k}^{k+m-1}{\mu^{(k)}_{r}})^{+}}=o(n^{1/2})\cdot \sup\mu^{(n)}_{r}=o(n^{1/2})$.
\item The stability of the multi-server system implies that $\sum_{k\in [n]}{(\mu^{(k)}_{p}-m\mu^{(k)}_{r})^{+}} =o(n^{1/2+\epsilon})$, using the elementary inequality $(x+y)^{+}\leq x^{+}+y^{+}$, we are able to obtain inequality,
\begin{align*}
\sum_{k\in [n]}\Big(\mu^{(k)}_{p}-\sum_{i=k}^{k+m-1}{\mu^{(k)}_{r}}\Big)^{+}\leq &\sum_{k\in [n]}(\mu^{(k)}_{p}-m\mu^{(k)}_{r})^{+}+\sum_{k\in [n]}\Big(m\mu^{(k)}_{r}-\sum_{i=k}^{k+m-1}{\mu^{(i)}_{r}}\Big)^{+}\\
\leq& \sum_{k\in [n]}(\mu^{(k)}_{p}-m\mu^{(k)}_{r})^{+}+o(n^{1/2+\epsilon})=o(n^{1/2+\epsilon}).
\end{align*}
under the condition that
\begin{itemize}
\item $|\mu^{(i)}_{r}-\mu^{(j)}_{r}|=o(n^{-1/2})$, since $\sum_{k\in [n]}(m\mu^{(k)}_{r}-\sum_{i=k}^{k+m-1}{\mu^{(k)}_{r}})^{+}\leq (m-1)n\cdot\sup_{i,j}|\mu^{(i)}_{r}-\mu^{(j)}_{r}|=o(n^{1/2})$
\item $\{\mu^{(k)}_{r}\}_{k\in [n]}$ is non-decreasing, which implies that $(m\mu^{(k)}_{r}-\sum_{i=k}^{k+m-1}{\mu^{(k)}_{r}})^{+}=0$.
\end{itemize}

\end{itemize}

By (\ref{singlequecon}) and Lemma \ref{singleopt}, we have $\mathbbm{P}(F^{(i)}_{\pi^{*}}=o(n^{3/2}_{i}))=1$ for $\forall i\in [m]$, consequently we know that $F_{\pi^{*}}=\sum_{i\in [m]}{F^{(i)}_{\pi^{*}}}=\sum_{i\in [m]}o(n^{3/2}_{i})=o(n^{2})$ holds almost surely. The proof is complete.
\end{proof}

\subsection{Putting things together}

\begin{proofof}{Theorem~\ref{completiontheo}}

We are able to show the conclusion by Theorem~\ref{wclowbound}, Lemma~\ref{randommax} and Lemma~\ref{flowtimebound}, if there exists a constant lower bound on the minimum job workload. However, the job size could be arbitrary small. We finish the proof by considering the work-conserving algorithm that has the worst performance.

Note that for any given input $\{(p_{i},r_{i})\}_{i\in [n]}$, there are finite number of choices to schedule jobs in a work-conserving way. Indeed, all the jobs must be completed before time $t=r_{n}+\sum_{i\in [n]}{p_{i}}$ and there are at most $\binom{n}{m}$ different allocation schemes at each time slot. We remark that it suffices to prove the conclusion for algorithm $\mathcal{W}$, the ``worst'' work-conserving algorithm. For every given input instance, $\mathcal{W}$ always follows the work-conserving order that incurs  the largest total flow time. It is important to note that, there may be certain restrictions in specific models, for example, preemptive and non-preemptive constraints may exist. However, for the work-conserving orders considered in the definition of $\mathcal{W}$, the only requirement is to keep the machines busy whenever there are jobs alive. Hence $\mathcal{W}$ may not be a feasible solution to the problem considered, but can be always regarded as a universal bound on the set of feasible work-conserving algorithms, \ie, $F_{\pi}\leq F_{\mathcal{W}}$ always holds for any feasible work-conserving algorithm $\pi$.

For any fixed $\Delta$, we consider another benchmark system $\Sigma^{\prime}$, where the sizes of jobs with workload below the threshold $\Delta$ are increased to $\Delta$. Formally, the probability density function of job size in system $\Sigma^{\prime}$ is
\begin{align*}
f^{\prime}(p)=f(p)\cdot \mathbbm{1}_{p>\Delta}+\Delta \cdot \delta(p-\Delta),
\end{align*}
where $\delta(\cdot)$ represents the Dirac delta function and $f(\cdot)$ denotes the probability density function in the original system $\Sigma$. Then the total amount of flow time under algorithm $\mathcal{W}$ in the original system $\Sigma$ is no more than that in $\Sigma^{\prime}$. Since for every instance, the worst work-conserving order in system $\Sigma$ can be modified to a feasible work-conserving order in $\sigma^{\prime}$, by processing the additional workload at the end. It can be seen that this modified order is work-conserving and the incurred total flow time is no less than that incurred by $\mathcal{W}$ in $\Sigma$, and is no more than that incurred by $\mathcal{W}$ in $\Sigma^{\prime}$. Hence we can conclude that $F_{\mathcal{W}}\leq F^{\prime}_{\mathcal{W}}$.

Note that for the modified system $\Sigma^{\prime}$ and $\mathcal{W}$, 
\begin{align}
F^{\prime}_{\mathcal{W}}\leq {B^{(n)}}^{\prime}\cdot F^{\prime}_{\pi^{*}},
\end{align}
where ${B^{(n)}}^{\prime}=p_{\max}/\Delta$ satisfies that
\begin{align}
\lim_{n\rightarrow \infty} \frac{{{B}^{(n)}}^{\prime}}{n^{1/2-\epsilon}}= \lim_{n\rightarrow \infty} \frac{\max_{i\in [n]} p_{i}}{\Delta \cdot n^{1/2-\epsilon}}=0, w.p.1,
\end{align}
according to Lemma~\ref{randommax}. In addition, the $\alpha$-th moment of the job size in system $\Sigma^{\prime}$ is also finite,
\begin{align*}
\mathbbm{E}[{p^{\prime}}^{\alpha}]=\int_{[0,\Delta]}{{p^{\prime}}^{\alpha} f^{\prime}(p^{\prime}) dp^{\prime}}+\int_{(\Delta,\infty)}{{p^{\prime}}^{\alpha}f^{\prime}(p^{\prime})dp^{\prime}}\leq \Delta^{\alpha}+\mathbbm{E}[p^{\alpha}]<\infty, \forall \alpha>0,
\end{align*}
and the job workload distributions are indepedent. Hence we are able to conclude that $\mathbb{P}(F^{\prime}_{\mathcal{W}}=o(n^{3/2}))=1$, which implies that $\mathbb{P}(F_{\mathcal{W}}=o(n^{3/2}))=1$ and $\mathbb{P}(F_{\pi}=o(n^{3/2}))=1$ for any work-conserving algorithm $\pi$. Combining with Observation~\ref{observationofflowtime}, the proof for identical machine setting is complete.

\paragraph{Machines with different speed.} Now we prove the optimality condition for machines in parallel with different speeds. We would like to point out that our result still holds when there are $m$ machines in parallel with different speeds and the mean values of job interarrival time are identical. 

In the following we use $s_{i}$ to denote the speed of machine $i$. Firstly, to bound $F_{\pi^{*}}$, we reduce the problem to single server system by assigning each arriving job to machine $i$ with probability
\begin{align*}
p_{i}=\frac{s_{i}}{\sum_{j=1}^{m}{s_{j}}},  
\end{align*}
which is similar as~\cite{chen2007probabilistic}.
It can be seen that the job interarrival time at machine $i$ can be expressed as the summation of $n_{i}$ random variables with mean value of $\mu_{r}$, where $n_{i}$ follows Geometric distribution with success probability $p_{i}$. Then the mean value of interarrival time at machine $i$ is equal to 
\begin{align*}
{\bar{\Delta}r}^{(i)}_{j}=\mathbbm{E}[\Delta r^{(i)}_{j}]=\mu_{r}/p_{i}=\frac{\mu_{r}\cdot \sum_{j\in [m]}{s_{j}}}{s_{i}}.    
\end{align*}
Similar as the proof of Lemma~\ref{multistable}, stability of the multi-server system implies that 
\begin{align*}
\sum_{i\in [n]}{\Big(\frac{\mu^{(i)}_{p}}{\sum_{j\in [m]}{s_{j}}}-\mu_{r}\Big)^{+}}=o(n^{1/2}),
\end{align*}
from which we can obtain the $o(n^{3/2})$ bound on the optimal flow time at each machine, based on Lemma~\ref{singleopt} and the following fact,
\begin{align*}
\sum_{j\in A_{i}}{\Big(\frac{\mu^{(j)}_{p}}{s_{i}}-\bar{\Delta} r^{(i)}_{j}\Big)^{+}}=\sum_{j\in A_{i}}{\Big(\frac{\mu^{(j)}_{p}}{s_{i}}-\frac{\mu_{r}\cdot \sum_{j\in [m]}{s_{j}}}{s_{i}}\Big)^{+}}\leq \frac{\sum_{j=1}^{m}{s_{j}}}{s_{i}}\cdot \sum_{i\in [n]}{\Big(\frac{\mu^{(i)}_{p}}{\sum_{j\in [m]}{s_{j}}}-\mu_{r}\Big)^{+}}=o(n^{1/2}).
\end{align*}
The remaining proof is similar as the identical machine setting. For example, similar as Theorem~\ref{wclowbound}, we can show that
\begin{align}
F_{\pi}& \leq B\cdot \Big[\Big(\int_{t\in \mathfrak{T}^{\sharp}_{\pi}}{n_{\pi}(t) dt}+ \int_{t \in \mathfrak{T}\setminus\mathfrak{T}^{\sharp}_{\pi}}{(m-1) dt} \Big) + F_{\pi^{*}} \Big]\notag\\
&\leq \Big(2+\frac{m-1}{\sum_{j\in [m]}{s_{j}}}\Big)B \cdot F_{\pi^{*}}=\Theta(B)\cdot F_{\pi^{*}}.
\end{align}
The proof is complete.
\end{proofof}

\section{Further Generalizations}\label{secgeneralization}

\paragraph{Minimizing weighted completion time.}
Indeed for the more general problem of minimizing weighted total completion time, the asymptotic optimality of work-conserving algorithms still hold, which can be proved via the same arguments as the unit weight case. 

\begin{proposition}\label{weightcompletiontheo}	
Any work conserving algorithm $\pi$ is almost surely asymptotically optimal for minimizing weighted total completion time under Assumption~\ref{completiontimeassump} or \ref{assumptionrho}, and the interarrival time, job workload and weight $\{\omega_{k}\}_{k\in [n]}$ defined on $[1,+\infty)$ are independent with
\begin{itemize}
\item finite second, $\alpha$-th and $\beta$-th moments respectively, where $1/\alpha+1/\beta=1-\epsilon$.
\item finite second, $(2+\epsilon)$-th moments and finite generating function (\ie, $\mathbbm{E}[e^{\epsilon \omega_{k}}]<\infty \;\forall k\in[n]$) respectively.
\end{itemize}

\end{proposition}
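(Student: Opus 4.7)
The plan is to mimic the proof of Theorem~\ref{completiontheo}, but with weighted sums in place of unweighted ones. Since $C_k^{\pi^*} \geq r_k$ and $f_k^\pi = C_k^\pi - r_k$, we have
$$\sum_k \omega_k C_k^\pi - \sum_k \omega_k C_k^{\pi^*} \;\leq\; \sum_k \omega_k f_k^\pi,$$
so it suffices to show $\sum_k \omega_k f_k^\pi = o\bigl(\sum_k \omega_k C_k^{\pi^*}\bigr)$ almost surely. Because $\omega_k \geq 1$, the denominator is at least $\sum_k C_k^{\pi^*}$, and the remark following Observation~\ref{observationofflowtime}---combined with the modified-system reduction from the proof of Theorem~\ref{completiontheo}, which replaces vanishing minimum workload by a fixed threshold $\Delta$---gives $\sum_k C_k^{\pi^*} = \Omega(n^2)$ a.s. Hence the goal reduces to proving $\sum_k \omega_k f_k^\pi = o(n^2)$ a.s.

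For the numerator I would use Theorem~\ref{wclowbound} and Lemma~\ref{flowtimebound}, giving $F_\pi = \sum_k f_k^\pi \leq 2B \cdot F_{\pi^*} = O(B)\cdot o(n^{3/2+\epsilon})$ a.s.; by Lemma~\ref{randommax} applied to workload in the modified system, $B = o(n^{1/(2+\epsilon)})$, so $F_\pi = o(n^{2-\delta})$ for some $\delta>0$. In case (ii), the finite moment generating function of the weights yields $\max_k \omega_k = O(\log n)$ a.s.\ through a standard Chernoff/Borel--Cantelli estimate, and therefore
$$\sum_k \omega_k f_k^\pi \;\leq\; (\max_k \omega_k)\cdot F_\pi \;=\; O(\log n)\cdot o(n^{2-\delta}) \;=\; o(n^2),$$
which closes that case. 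In case (i) the trivial max-bound is insufficient, so I would instead apply H\"older's inequality with exponents $\beta$ and $\beta/(\beta-1)$, bound $(\sum_k \omega_k^\beta)^{1/\beta} = O(n^{1/\beta})$ via the strong law of large numbers, and control $(\sum_k (f_k^\pi)^{\beta/(\beta-1)})^{(\beta-1)/\beta}$ by combining the global estimate $F_\pi = o(n^{2-\delta})$ with a tail estimate on $\max_k f_k^\pi$ obtained from the $\alpha$-th moment of workload. The balance condition $1/\alpha + 1/\beta = 1-\epsilon$ is precisely the exponent matching needed to force the product of the two H\"older factors below $n^2$.

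The hard part will be case (i): one must carefully trade off the tails of workload (which inflate $B$, and hence $F_\pi$) against the tails of weights (which appear linearly in the weighted flow time) so that the final product beats the $\Omega(n^2)$ lower bound on $\sum_k \omega_k C_k^{\pi^*}$. Unlike case (ii), where the polylogarithmic size of $\max_k \omega_k$ leaves ample slack, here both $B$ and $\max_k \omega_k$ may grow polynomially, and only a H\"older argument that exploits the constraint $1/\alpha + 1/\beta = 1-\epsilon$ closes the gap. A secondary but important subtlety, already present in the proof of Theorem~\ref{completiontheo}, is that one should first argue about the worst work-conserving schedule $\mathcal{W}$, whose choices depend only on $(p,r)$, so that the weights $\{\omega_k\}$ remain independent of the schedule and the moment-based conditioning arguments go through cleanly.
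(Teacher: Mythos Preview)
Your handling of case~(ii) coincides with the paper's: bound $\max_k\omega_k=O(\log n)$ via the exponential moment, pull it out of the sum, and absorb the logarithm into the polynomial slack of $F_\pi$.

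For case~(i), however, the paper does \emph{not} use H\"older; it applies the very max-bound you call ``insufficient.'' By Lemma~\ref{randommax} one has $\omega_{\max}=o(n^{1/\beta})$ and (in the modified system) $B^{(n)}=o(n^{1/\alpha})$, and the paper then takes $\sum_i f_i^{*}=O(n)$, giving
\[
\sum_i \omega_i f_i^{\pi}\;\le\;\omega_{\max}\cdot B^{(n)}\cdot\sum_i f_i^{*}\;=\;O\bigl(n^{1/\alpha+1/\beta+1}\bigr)\;=\;o(n^{2}).
\]
So the constraint $1/\alpha+1/\beta=1-\epsilon$ enters only additively in the exponent, not through any H\"older conjugacy. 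Your belief that the max-bound fails comes from inserting $F_{\pi^{*}}=o(n^{3/2+\epsilon})$ from Lemma~\ref{flowtimebound}; the paper instead asserts the sharper $\sum_i f_i^{*}=O(n)$ (stated without proof there, but in line with bounded mean sojourn time under the stability hypothesis).

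Your H\"older route is not wrong in spirit, but the sketch has two soft spots. First, bounding $\bigl(\sum_k (f_k^{\pi})^{\beta/(\beta-1)}\bigr)^{(\beta-1)/\beta}$ via ``a tail estimate on $\max_k f_k^{\pi}$ obtained from the $\alpha$-th moment of workload'' is not immediate: $f_k^{\pi}$ depends on the entire queue history, not on $p_k$ alone, so Lemma~\ref{randommax} does not apply directly. Second, your preliminary estimate $F_\pi=o(n^{2-\delta})$ used $B=o(n^{1/(2+\epsilon)})$, which presumes the case-(ii) moment on workload; in case~(i) the workload has only an $\alpha$-th moment, and the hypothesis $1/\alpha+1/\beta=1-\epsilon$ does not force $\alpha>2$. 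The paper's simpler max-bound avoids both issues.
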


\begin{proof}
See Appendix~\ref{appendixpro}.
\end{proof}








\paragraph{Relaxing the independence assumption.} It is clear to see that our analysis indeed carry over beyond the independence assumptions on job workload and arrival process. The asymptotic optimality condition requires nothing more than the convergence results in inequalities (\ref{totalarrivalpro}), (\ref{sigma1ineq}) and (\ref{kolmo}). We remark that Theorem~\ref{completiontheo} can be indeed generalized to the setting when Assumption \ref{completiontimeassump} is replaced by the following condition. The proof is deferred to Appendix~\ref{appendfact}.

\begin{assumption} There exists $\{u^{(k)}_{p}\}_{k\in [n]}$ and $\{u^{(k)}_{r}\}_{k\in [n]}$ such that for $\forall 1\leq i\leq j\leq n$,
\begin{align}
\mathbbm{E}\Big[\sum^{j}_{k=i}{(p_{k}-\mu^{(k)}_{p})}\Big]^{2}<\sum_{i\leq \ell\leq j}{u^{\ell}_{p}}, \;\mathbbm{E}\Big[\sum^{j}_{k=i}{(\Delta r_{k}-\mu^{(k)}_{r})}\Big]^{2}<\sum_{i\leq \ell\leq j}{u^{\ell}_{r}}.\label{gencon}
\end{align}
\end{assumption}




\section{Numerical Results}\label{numericalsec}
In this section we conduct simulations to validate the convergence of competitive ratios of various work-conserving disciplines. We consider a computing system with $m=20$ machines. Job processing times are i.i.d distributed and follow from exponential distribution with mean $\mu$. Jobs arrive according to a Poisson process with rate $\lambda=m\rho\mu$, where $\rho$ represents the traffic intensity. As the system will be less congested when the traffic intensity $\rho$ is small, intuitively the resulting total completion time should be close to the minimum total completion time. Therefore we focus on scenarios when $\rho$ is close to $1$. More specifically, we let $\mu=1/40$, $\lambda=0.45$  and $0.49$, where $\rho=0.9$ and $\rho=0.98$ respectively.

As shown in Figure~\ref{figfcfs}--\ref{figrn}, for each discipline, we plot the \emph{empirical distribution function}, \ie, the estimated cumulative distribution function (CDF), of the ratio between total completion time incurred and that incurred under the optimal discipline $\mathrm{OPT}$. It is worth mentioning that finding the exact total completion under $\mathrm{OPT}$ is computationally expensive, due to the NP-hardness and exponential search space. We use the total arrival time, an explicit lower bound of total completion time, to calculate the ratios. Hence, the ratios in the figures are indeed pessimistic estimations of the real value and for every fixed sample path, the ratios under different disciplines are amplified by the same factor. The list of (work-conserving) scheduling disciplines considered and corresponding results are summarized as following.
\begin{itemize}
\item \emph{First Come First Serve} (FCFS). FCFS is the simplest form of scheduling algorithm, which always processes the jobs by the order of their arrival. FCFS is the default scheduler in Hadoop. Results are shown in Figure~\ref{figfcfs}.
\begin{figure}[H]
\centering
\vspace{-0.5cm}
 \subfloat{
 \includegraphics[scale=0.72]{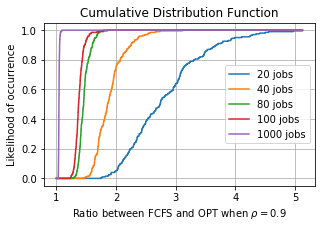}
 \label{fcfs1}
 }
 \subfloat{
 \includegraphics[scale=0.72]{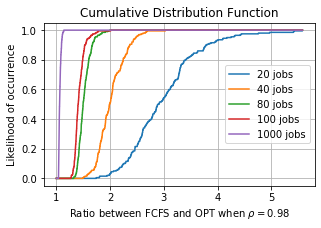}
 \label{fcfs2}
 }
 \vspace{-0.2cm}
 \caption{Empirical Distribution Function under FCFS}
 \vspace{-0.1cm}
\label{figfcfs}
\end{figure}
\vspace{-0.5cm}
\item \emph{Shortest Remaining Processing Time} (SRPT). As it is well-known, jobs with lower remaining processing time have a higher priority under SRPT. SRPT is efficient in optimizing the metric of mean response time and has been applied in several real life applications, including web servers~\cite{Harchol-Balter00implementationof}. Results are shown in Figure~\ref{figsrpt}.

\begin{figure}[H]
\centering
\vspace{-0.5cm}
 \subfloat{
 \includegraphics[scale=0.72]{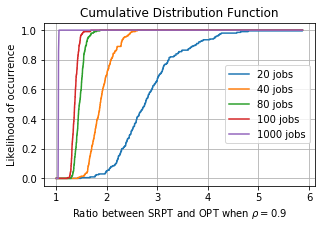}
 \label{srpt1}
 }
 \subfloat{
 \includegraphics[scale=0.72]{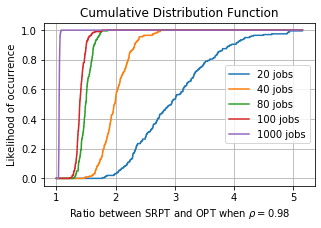}
 \label{srpt2}
 }
  \vspace{-0.2cm}
 \caption{Empirical Distribution Function under SRPT}
 \vspace{-0.1cm}
\label{figsrpt}
\end{figure}
\vspace{-0.5cm}

\item \emph{Longest Remaining Processing Time} (LRPT). As opposed to SRPT, LRPT always processes the job with longest remaining processing time. Since large jobs are handled slowly, we can get a sense of how poor that the performances of work-conserving algorithms can be by considering LRPT. Results are shown in Figure~\ref{figlrpt}.

\begin{figure}[H]
\centering
\vspace{-0.5cm}
 \subfloat{
 \includegraphics[scale=0.72]{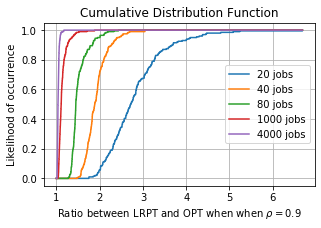}
 \label{lrpt1}
 }
 \subfloat{
 \includegraphics[scale=0.72]{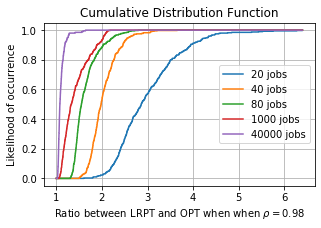}
 \label{lrpt2}
 }
  \vspace{-0.2cm}
 \caption{Empirical Distribution Function under LRPT}
 \vspace{-0.1cm}
\label{figlrpt}
\end{figure}
\vspace{-0.5cm}

\item \emph{Shortest Processing Time} (SPT). In this scheduling algorithm, jobs are executed according to order of the job processing time. SPT is known to be optimal for minimizing completion time in single machine setting. Results are shown in Figure~\ref{figspt}.

\begin{figure}[H]
\centering
\vspace{-0.5cm}
\subfloat{
 \includegraphics[scale=0.72]{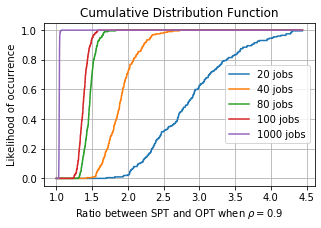}
 \label{spt1}
 }
 \subfloat{
 \includegraphics[scale=0.72]{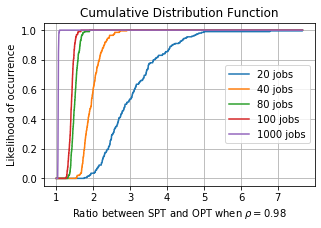}
 \label{spt2}
 }
 \vspace{-0.2cm}
 \caption{Empirical Distribution Function under SPT}
 \vspace{-0.1cm}
\label{figspt}
\end{figure}
\vspace{-0.5cm}
\item \emph{Random}. Different from the deterministic polices above, we randomly allocate the resources and each available job is selected with equal probability. By using randomness to make decisions, \emph{Random} is easy to implement. In addition, it requires almost no information about the system state and thus incurs very little overhead. Results are shown in Figure~\ref{figrn}.

\begin{figure}[H]
\centering
\vspace{-0.5cm}
 \subfloat{
 \includegraphics[scale=0.72]{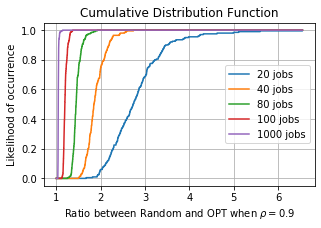}
 \label{fcfs1}
 }
 \subfloat{
 \includegraphics[scale=0.72]{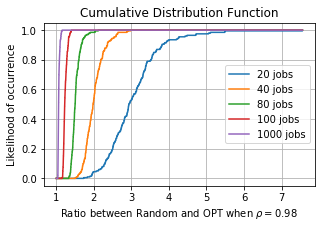}
 \label{fcfs2}
 }
 \vspace{-0.2cm}
 \caption{Empirical Distribution Function under Random}
 \vspace{-0.1cm}
\label{figrn}
\end{figure}
\end{itemize}
\vspace{-0.5cm}

\paragraph{Discussion on the numerical results.} We can see that for all the disciplines and sample paths in the experiments, the total completion time has a small constant gap (no more than $7$) to the optimum. This result coincides with the intuition that completion time is a relatively robust evaluation metric. On the other hand, almost for each fixed discipline, the ratio between the total completion time and the optimum total completion time incurred by a large number of jobs, is stochastically smaller than (first-order stochastically dominated\footnote{Random variable $A$ has a first order dominance over random variable $B$ if $\mathbbm{P}(A\geq x)\geq \mathbbm{P}(B\geq x)$ for all $x$ and  for some $x$, $\mathbbm{P}(A\geq x)> \mathbbm{P}(B\geq x)$.} by) that incurred by a smaller number of jobs. In addition, the empirical CDF converges to the unit step function at ratio of $1$, which verifies our asymptotic optimality conclusion. Note that in the results above, the gap between work-conserving disciplines and $\mathrm{OPT}$ is close to $1$ when the number of jobs is in the order of $10^{3}$, which is common in large scale applications. 

\section{Conclusion}\label{conclusionsec}
In this paper, we proved that in parallel machine environment, all work-conserving disciplines are asymptotic optimal in minimizing total completion time, as long as the interarrival time and job workload are independently distributed and have finite second and $(2+\varepsilon)$-th moment respectively, while the mean interarrival time are almost identical or non-decreasing. We further discussed simple generalization to weighted completion time minimization and showed possible relaxations on the independence assumption. To establish the result, we also obtained a complete characterization in competitive ratio bounds for work-conserving disciplines and the objective of flow time.

\bibliography{scheduling}
\bibliographystyle{plain}

\appendix

\section{Proof of Proposition~\ref{tightexample}}\label{appendtightexample}

\begin{proof}
It is known that SRPT achieves good performance guarantee with respect to flow time, hence it is natural to turn to the opposite direction, the \emph{longest remaining processing time first} (LRPT) discipline, when considering upper bound side of competitive ratio.

Consider a system consisting of $m$ large jobs with size $B$ and $mn$ small jobs with unit size. All the large jobs arrive at time $t=0$ and $m$ small jobs arrive at the beginning of every time slot $t\in [0,n-1]$. It is clear to see that the total flow time under LRPT is equal to
\begin{align*}
F_{\mathrm{LRPT}}=m\cdot B+ mn\cdot B=m(n+1)\cdot B,	
\end{align*}
while the total flow time under SRPT is
\begin{align*}
F_{\mathrm{SRPT}}= m\cdot (n+B)+mn=2mn+mB.	
\end{align*}
Hence the competitive ratio of LRPT is no less than
\begin{align*}
C_{\mathrm{LRPT}}=\frac{F_{\mathrm{LRPT}}}{F_{\pi^{*}}}\geq \frac{F_{\mathrm{LRPT}}}{F_{\mathrm{SRPT}}}\Big|_{n\rightarrow \infty}= \frac{m(n+1)\cdot B}{2mn+mB}\Big|_{n\rightarrow \infty}=\frac{B}{2}.	
\end{align*}
As LRPT is work-conserving, the proof is complete.
\end{proof}

\section{Proof of Lemma~\ref{randommax}}\label{appendixmaxlemma}

\begin{proof}
The proof of the i.i.d distributed case mainly relies on the Borel-Cantelli Lemma, and proof of the general case simply utilizes the Markov inequality.
\paragraph{Case $1$.} When $\{X_{i}\}_{i\in[n]}$ are \rm{i.i.d} distributed, we first consider the sequence $\{\frac{X_{i}}{i^{1/r}}\}_{i\in [n]}$ instead. Notice that for $\forall \varepsilon>0$,
\begin{align}
\sum_{i=1}^{\infty}{\mathbbm{P}}	\Big( \frac{X_{i}}{i^{1/r}}\geq \varepsilon \Big)=\sum_{i=1}^{\infty}{\mathbbm{P}\Big(\frac{X^{r}}{\varepsilon^{r}}\geq i\Big)}\leq  &\sum_{i=1}^{\infty}{\int_{i-1}^{i}\mathbbm{P}\Big(\frac{X^{r}}{\varepsilon^{r}}\geq t \Big) dt}\notag\\
=&\int_{0}^{\infty}{\mathbbm{P}\Big(\frac{X^{r}}{\varepsilon^{r}}\geq t\Big)} dt=\frac{\mathbbm{E}(X^{r})}{\varepsilon^{r}}< \infty.
\end{align}
According to Borel-Cantelli Lemma~\cite{Durrett:2010:PTE:1869916}, we know that
\begin{align}\label{limsupequ}
\limsup \Big\{\frac{X_{i}}{i^{1/r}}\Big\}=0	
\end{align}
holds almost surely.
We next show the following lower bound on the limit superior of sequence $\{\frac{X_{i}}{i^{1/r}}\}$,
\begin{align}\label{maximumofnvari}
\lim_{n\rightarrow \infty}\frac{\max_{i\in [n]}{X_{i}}}{n^{1/r}}=\lim_{n\rightarrow \infty}\frac{\max_{i\in [n+1, 2n]}{X_{i}}}{n^{1/r}} \leq 2^{1/r}	\cdot \limsup \Big\{\frac{X_{i}}{i^{1/r}}\Big\} =0, w.p.1,
\end{align}
where the first equality holds due to fact that $X_{i}$ are identically distributed.

On the other hand, when $\mathbbm{E}[X^{r}]=\infty$,
\begin{align}
\sum_{i=1}^{\infty}{\mathbbm{P}}\Big( \frac{X_{i}}{i^{1/r}}\geq M\Big)=\sum_{i=1}^{\infty}{\mathbbm{P}\Big(\frac{X^{r}}{M^{r}}\geq i\Big)}\geq  &\sum_{i=1}^{\infty}{\int_{i}^{i+1}\mathbbm{P}\Big(\frac{X^{r}}{M^{r}}\geq t \Big) dt}\geq \frac{\mathbbm{E}(X^{r})}{M^{r}}-1= \infty
\end{align}
holds for any $M>0$. Since events $\{\frac{X_{k}}{i^{1/r}}\geq M\}_{k\in [n]}$ are independent, we are able to conclude that $ \mathbbm{P}(\frac{X_{n}}{n^{1/r}}\geq M, \io)=1 \;(\forall M>0)$ according to the second Borel-Cantelli Lemma~\cite{Durrett:2010:PTE:1869916}, and thus $\lim_{n\rightarrow \infty}\frac{\max_{i\in [n]}{X_{i}}}{n^{1/r}}=\infty$ almost surely. The proof of the first case is complete.

\paragraph{Case $2$.} With the assumption of bounded $(r+\epsilon)$-th moment of random variables $\{X_{i}\}_{i\in [n]}$, we can obtain the following bound  when $n\geq N_{\varepsilon,\delta}= \Big(\frac{\sup \mathbbm{E}[X^{r+\epsilon}]}{\delta \cdot \varepsilon^{r+\epsilon}} \Big)^{\frac{r}{\epsilon}}$,
\begin{align*}
\mathbbm{P}\Big({\frac{\max_{i\in [n]}{X_{i}}}{{n}^{1/r}} \leq \varepsilon}\Big)=& 1-\mathbbm{P}\Big({\frac{\max_{i\in [n]}{X_{i}}}{n^{1/r}} > \varepsilon}\Big)\\
=& 1-\mathbbm{P}\Big( \bigcup_{i\in [n]} \Big\{\frac{X_{i}}{n^{1/r}}> \varepsilon \Big\} \Big)\\
\geq & 1- n\cdot \mathbbm{P}\Big(X^{r+\epsilon}> n^{1+\epsilon/r}\cdot \varepsilon^{r+\epsilon} \Big)\tag{union bound}\\
\geq & 1- \frac{\sup\mathbbm{E}(X^{r+\epsilon})}{N_{\varepsilon,\delta}^{\epsilon/r}\cdot \varepsilon^{r+\epsilon}}= 1-\delta,
\end{align*}
in which the second inequality follows from Markov inequality, \ie, $\mathbbm{P}(\frac{X_{i}}{n^{1/r}}> \varepsilon)=\mathbbm{P}(X^{r+\epsilon}> n^{1+\epsilon/r}\cdot \varepsilon^{r+\epsilon})\leq \frac{\mathbbm{E}(X^{r+\epsilon})}{n^{1+\epsilon/r}\cdot \varepsilon^{r+\epsilon}}$. The proof of the general case is complete.

%
\end{proof}

\section{Proof of Observation~\ref{observationofflowtime}}\label{appendixob}
\begin{proof}
We first show the lower bound of total arrival time. Indeed the total arrival time satisfies that
\begin{align}\label{totalarrivaltime}
\sum\nolimits_{i\in [n]}{r_{i}}=\sum\nolimits_{i\in [n]}{\sum\nolimits_{j\in [i]}{\Delta r_{j}}}=\sum\nolimits_{i\in [n]}{(n-i)\cdot \Delta r_{i}}\geq \frac{n\cdot \sum\nolimits_{i\in [\frac{n}{2}]}{\Delta r_{i}}}{2},
\end{align}
for which we know that $\lim_{n\rightarrow +\infty}{\frac{\sum_{i\in [\frac{n}{2}]}{[\Delta r_{i}-\mu^{(i)}_{r}]}}{n}}=0$ w.p.1, since
\begin{align}\label{totalarrivalpro}
\mathbbm{P}\Big(\frac{\sum_{i\in [\frac{n}{2}]}[\Delta r_{i}-\mu^{(i)}_{r}]}{n}\geq \varepsilon\Big)\leq \frac{\sum_{i\in [\frac{n}{2}]}{\sigma^{(i)}_{r}}^{2}}{n^{2}\varepsilon^{2}}\rightarrow 0,
\end{align}
where $\sigma^{(i)}_{r}$ represents the variance of the $k$-th interarrival time. Consequently we know that
$\lim_{n\rightarrow +\infty}{\sum_{i\in [n]}{r_{i}}}=\Omega(n^2)$, and the following equality holds if the total flow time under the optimal algorithm $\pi^{*}$ is in the order of $o(n^{2}/B^{(n)})$,
\begin{align*}
\lim_{n\rightarrow \infty}\frac{\sum_{i\in [n]}{C^{\pi}_{i}}}{\sum_{i\in [n]}C^{\pi^{*}}_{i}}	=&\lim_{n\rightarrow \infty}\frac{1+(\sum_{i\in [n]}{f^{\pi}_{i}})/(\sum_{i\in [n]}{r_{i}})}{1+(\sum_{i\in [n]}{f^{*}_{i}})/(\sum_{i\in [n]}{r_{i}})}\\
=&\lim_{n\rightarrow \infty}\frac{1+\frac{o(B^{(n)}\cdot (n^{2}/B^{(n)}))}{\Omega(n^{2})}}{1+\frac{o(n^{2})}{\Omega(n^{2})}}=1.\tag{Theorem~\ref{tightuppinfworkcon}}
\end{align*}
In other words, the completion time is indeed dominated by the total arrival time. The proof is complete.
 \end{proof}

\section{Proof of Proposition~\ref{singlestablepro}} \label{appendixpfstab}
\begin{proof}
Note that for $t\in [r_{k}, r_{k+1})$, the remaining workload under any work-conserving algorithm satisfies the Lindley equation~\cite{asmussen2008applied},
\begin{align*}
W(t)=(W(r_{k})+p_{k}-(t- r_{k}))^{+}, \;\forall  t\in [r_{k}, r_{k+1}),
\end{align*}
in which $W(r_{k})=W_{k}$ equals to the waiting time of the $k$-th arriving job under FCFS discipline.

Similar as the proof of Lemma~\ref{singleopt}, we have
\begin{align*}
&\lim_{n\rightarrow \infty}{\frac{W(r_{n})}{n^{1/2+\epsilon}}}=\lim_{n\rightarrow \infty}\frac{W_{n}}{n^{1/2+\epsilon}}\\
=&\lim_{n\rightarrow \infty}{\frac{T_{n}-\min_{k\in [n]}{T_{k}} }{n^{1/2+\epsilon}}}\\
=&\lim_{n\rightarrow \infty}{\Big\{\frac{[T_{n}-\sum_{k\in [n]}{\mu^{(k)}_{v}}]+[\max_{k\in [n]}{\{-T_{k}\}}-\sum_{k\in [n]}{(-\mu^{(k)}_{v})\cdot \mathbbm{1}_{\mu^{(k)}_{v}<0}}]+[\sum_{k\in [n]}{\mu^{(k)}_{v}\cdot \mathbbm{1}_{\mu^{(k)}_{v}>0}}]}{n^{1/2+\epsilon}}}\Big\}\\
=&o(1)+o(1)+\Omega(1)=\Omega(1), w.p.1,
\end{align*}
when condition (\ref{singleserverstable}) does not hold, which implies that $\mathbbm{P}(\lim_{t\rightarrow \infty}{W(t)}=\infty)=1$. Indeed we have
\begin{align*}
\lim_{t\rightarrow \infty}{\frac{W(t)}{t^{1/2}}}\geq \lim_{n\rightarrow \infty}{\frac{W(r_{\sigma_{t}})+p_{\sigma_{t}}-\Delta_{\sigma_{t}}}{\sigma^{1/2}_{t}}\cdot \Big(\frac{\sigma_{t}}{t}\Big)^{1/2}}=\Omega\Big( \frac{1}{\limsup_{n}{\mu^{(n)}_{r}} }\Big),
\end{align*}
where $\sigma_{t}=\min\{k\in [n]|r_{k}\geq t\}$ and
\begin{align*}
 \lim_{t\rightarrow \infty}\frac{\sigma_{t}}{t}\geq \lim_{t\rightarrow \infty}{\frac{\sigma_{t}}{r_{\sigma_{t}+1}}}\geq \frac{1}{\limsup_{n}{\mu^{(n)}_{r}} }, w.p.1.
 \end{align*}
We remark that it suffices to consider work-conserving disciplines, as it always incur the minimum possible remaining workload when there is a single server. The proof is complete.
\end{proof}

\section{Proof of Propostion~\ref{weightcompletiontheo}}\label{appendixpro}

\begin{proof}
We finish the proof by similar arguments as the unit weight case. Specifically, we first remark that the weighted arrival time is also in the order of $\sum_{i\in [n]}{\omega_{i}\cdot r_{i}}=\Omega(n^{2})$, as $\{\omega_{i}\cdot r_{i}\}_{i\in [n]}$ is a sequence of independent random variables with bounded mean value and variance,
Based on lemma~\ref{randommax}, we know that $\omega^{(n)}_{\max}=\max_{k\in [n]}{\omega_{k}}=o(n^{1/\alpha})$ and the weighted flow time satisfies that
\begin{align*}
\sum_{i\in [n]}{\omega_{i}\cdot f^{\pi}_{i}}\leq \omega_{\max}\cdot\sum_{i\in [n]}{f^{\pi}_{i}}\leq \omega_{\max}\cdot B^{(n)}\cdot \sum_{i\in [n]}f^{*}_{i}=O(\omega_{\max}\cdot B^{(n)}\cdot n)=O(n^{1/\alpha+1/\beta+1})=o(n^{2}).
\end{align*}
Consequently,
\begin{align}
\lim_{n\rightarrow \infty}\frac{\sum_{i\in [n]}{\omega_{i}\cdot C^{\pi}_{i}}}{\sum_{i\in [n]}\omega_{i} \cdot C^{\pi}_{i}}\leq &\lim_{n\rightarrow \infty} \Big(1+\frac{\sum_{i\in [n]}{\omega_{i} \cdot f^{\pi}_{i}}}{\sum_{i\in [n]}{\omega_{i} \cdot r_{i}}}\Big)=\lim_{n \rightarrow \infty} \Big(1+\frac{o(n^{2})}{\Omega(n^{2})}\Big)=1, w.p.1.
\end{align}	
If we assume moreover that $\mathbbm{E}[e^{\epsilon \omega_{k}}]\leq \infty \;(\forall k\in [n])$ for some $\epsilon>0$, then
\begin{align*}
\mathbbm{P}(\omega^{(n)}_{\max}\geq \lambda \log n)=\mathbbm{P}(e^{\omega^{(n)}_{\max}}\geq n^{\lambda})\leq \frac{\mathbbm{E}[e^{\omega^{(n)}_{\max}}]}{n^{\lambda}}\leq \frac{\sum_{k\in [n]}\mathbbm{E}[e^{\omega_{k}}]}{n^{\lambda}}\leq \frac{\sup \mathbbm{E}[e^{\omega_{n}}]}{n^{\lambda-1}}\rightarrow 0
\end{align*}
\end{proof}

\section{Proof for relaxing independence assumption}\label{appendfact}
\begin{proof}
To see this fact, (\ref{totalarrivalpro}) and (\ref{sigma1ineq}) follows from the same reasoning using Chebyshev inequality,  (\ref{kolmo}) can be proved by applying Fact~\ref{gencondition}, a generalized Kolmogorov's inequality on $X_{k}=(p_{k}-\Delta_{k})-(\mu^{(k)}_{p}-\mu^{(k)}_{r})$ for $\alpha=2$, due to the independence of workload and interarrival time.

A special case is when the job workload and interarrival time are \emph{negative correlated}, \ie,
\begin{align}\label{negativecor}
\cov(p_{\ell}, p_{k}), \cov(\Delta r_{\ell}, \Delta r_{k})<0, \forall \ell\neq k \in [n],
\end{align}
under which the correctness of inequality (\ref{gencon}) is guarteened for $u^{(k)}_{p}=\var[p_{k}]$, $u^{(k)}_{r}=\var[\Delta r_{k}]$ since
\begin{align*}
\mathbbm{E}\Big[\sum^{j}_{k=i}{(p_{k}-\mu^{(k)}_{p})}\Big]^{2}=\sum^{j}_{k=i}\mathbbm{E}[(p_{k}-\mu^{(k)}_{p})^{2}]+\sum_{i\leq \ell\neq k\leq j}\mathbbm{E}[(p_{\ell}-\mu^{(\ell)}_{p})(p_{k}-\mu^{(k)}_{p})]\leq \sum^{j}_{k=i}\mathbbm{E}[(p_{k}-\mu^{(k)}_{p})^{2}].
\end{align*}
\end{proof}

\section{Generalization of Kolmogorov's inequality}
\begin{fact}[\cite{billingsley2013convergence}]
\label{genkol}
Suppose $X_{1}, X_{2},\ldots, X_{n}$ are random variables with $\mathbbm{E}[X_{i}]=0$ and there exists sequence $\{u_{\ell}\}_{\ell\in [n]}$ such that
\begin{align}\label{gencondition}
\mathbbm{E}[|S_{j}-S_{i}|^{\alpha}]\leq \sum_{i\leq \ell\leq j}{u_{\ell}}, 1\leq i\leq j\leq n,
 \end{align}
 where $S_{\ell}=\sum_{i\in[\ell]}{X_{i}}$, then there exists constant $K$ for all positive $\lambda$,
 \begin{align*}
 \mathbbm{P}(\max_{\ell\in [n]}|S_{\ell}|\geq \lambda)\leq K\cdot \frac{\sum_{\ell\in [n]}{u_{k}}}{\lambda^{\alpha}}
 \end{align*}
 \end{fact}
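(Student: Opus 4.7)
The plan is to prove this generalized maximal inequality via a dyadic-decomposition induction on the index set, in the spirit of the classical argument of M\'oricz and Billingsley. First I would recast the statement symmetrically in block form: for $0 \leq i \leq j$ set $f(i,j) = \sum_{i < \ell \leq j} u_\ell$ and $M_{i,j} = \max_{i \leq \ell \leq j}|S_\ell - S_i|$, so the hypothesis becomes $\mathbbm{E}|S_j - S_i|^\alpha \leq f(i,j)$ and the target becomes $\mathbbm{P}(M_{0,n} \geq \lambda) \leq K f(0,n)/\lambda^\alpha$. The only structural property of $f$ I use is its additivity (hence superadditivity) in the index range: $f(i,m) + f(m,j) = f(i,j)$.

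The induction proceeds on the length $n = j - i$. The base case $n=1$ is handled directly by Markov's inequality, which gives $\mathbbm{P}(|X_{i+1}| \geq \lambda) \leq u_{i+1}/\lambda^\alpha$. For the induction step, split $[i,j]$ at the midpoint $m$ and observe the deterministic bound
\begin{align*}
M_{i,j} \leq \max\{M_{i,m},\ |S_m - S_i| + M_{m,j}\}.
\end{align*}
Applying a union bound with threshold $\beta\lambda$ on the middle increment and $(1-\beta)\lambda$ on the right-half block-max yields
\begin{align*}
\mathbbm{P}(M_{i,j}\geq\lambda) \leq \mathbbm{P}(M_{i,m}\geq \lambda) + \mathbbm{P}(|S_m - S_i|\geq \beta\lambda) + \mathbbm{P}(M_{m,j}\geq (1-\beta)\lambda).
\end{align*}
The middle term is controlled directly by Markov, bounded by $f(i,m)/(\beta\lambda)^\alpha$, and the two outer terms invoke the induction hypothesis. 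Additivity $f(i,m)+f(m,j)=f(i,j)$ is then precisely what makes the two inherited contributions combine into a single $f(i,j)/\lambda^\alpha$ without loss.

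The main obstacle will be the careful tuning of $\beta$ and the constants in the recursion so that $K$ stabilizes at a value depending only on $\alpha$, independent of $n$. A naive symmetric split $\beta = 1/2$ fails to close the recursion cleanly and formally leaks a $\log n$ factor; the remedy is to choose the thresholds geometrically across dyadic levels, with $\beta_k$ scaling like $c \cdot 2^{-k/\alpha}$, so that the accumulated perturbations $\sum_k \beta_k^{-\alpha}$ form a convergent geometric series. Once this geometric balancing is in place, the recursion telescopes and delivers the bound with $K = K(\alpha)$. The zero-mean hypothesis $\mathbbm{E}[X_i] = 0$ is not actively used in this combinatorial argument; it is present in the statement only because it is already subsumed by the moment condition (\ref{gencondition}) on the differences $S_j - S_i$.
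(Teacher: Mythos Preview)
The paper does not supply a proof of this Fact; it is simply cited from Billingsley, so there is no in-paper argument to compare against and your proposal has to stand on its own.

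Your dyadic strategy is the right family of ideas, but the closing step has a genuine gap. With $f(i,j)=\sum_{i<\ell\le j}u_\ell$ merely \emph{additive}, the recursion does not telescope to a constant. Concretely, your proposed scaling $\beta_k\sim c\cdot 2^{-k/\alpha}$ gives $\beta_k^{-\alpha}\sim c^{-\alpha}2^k$, so $\sum_k\beta_k^{-\alpha}$ is not a convergent geometric series at all --- it grows like $n$. More structurally: at dyadic level $k$ the total $f$-mass of the blocks equals $f(0,n)$ for \emph{every} $k$ (by additivity), so the Markov contributions across the $\log_2 n$ levels cannot decay no matter how you distribute the thresholds subject to $\sum_k\beta_k\le 1$. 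From the bare hypothesis (\ref{gencondition}) with a linear right-hand side, the best one can extract by chaining is a Rademacher--Menshov/M\'oricz bound carrying a polylogarithmic factor in $n$, and that factor is known to be sharp for orthogonal systems at $\alpha=2$.

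What makes the constant-$K$ version in Billingsley go through is the stronger, strictly superadditive hypothesis $\mathbbm{E}|S_j-S_i|^{\alpha}\le\bigl(\sum_{i<\ell\le j}u_\ell\bigr)^{\gamma}$ with $\gamma>1$: then the level-$k$ total mass scales like $2^{k(1-\gamma)}$ times the global mass, and the geometric balancing you describe really does converge. The version recorded here sits at the borderline $\gamma=1$. For the paper's purposes this is harmless --- in its one application the increments are independent and the ordinary Kolmogorov inequality already delivers the bound with $K=1$ --- but as a stand-alone proof of the Fact exactly as written, your argument cannot be completed, because without either the extra exponent $\gamma>1$ or a logarithmic loss in $K$ the conclusion is false in general.
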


\section{Remark on the Stability Condition}
We remark that the lower bound on the optimal flow time is non-trivial under the stability condition, \ie, the existence of stationary distribution. For example, consider independent random variable sequence $\{a^{(n)}_{k}\}_{k\in [n]}$ where $a_{k}=M^{(n)}_{1}$ with probability $p_{k}=\frac{1}{(k+1)^{2}}$, otherwise $a_{k}\in [0,M^{(n)}_{2}]$, where $M^{(n)}_{2}<M^{(n)}_{1}$ is finite. Then
\begin{align}\label{exampleineq}
\mathbbm{P}(\sum_{k\in[n]}{a_{k}}<M^{(n)}_{1})\leq &\mathbbm{P}(a_{k}<M^{(n)}_{1},\forall k\in [n])\\
=&\prod_{k\in [n]}\mathbbm{P}(a_{k}\in [0, M^{(n)}_{2}],\forall k\in [n])=\prod_{k\in [n]}(1-p_{k})=\frac{n+2}{2n+2}.
\end{align}
Let $M^{(n)}_{1}\rightarrow \infty$, from (\ref{exampleineq}) we know that ${\mathbbm{P}(\lim_{n\rightarrow \infty}\sum_{k\in[n]}{a_{k}}=\infty)}\geq{\mathbbm{P}(\lim_{n\rightarrow \infty}\sum_{k\in[n]}{a_{k}}\geq M^{(n)}_{1})}\geq 1/4$, though $\lim_{n\rightarrow \infty}\mathbbm{P}(a_{n}<\infty)=1$.
\end{document}